\documentclass[11pt]{article}
\usepackage[margin=1in]{geometry}
\usepackage[english]{babel}
\usepackage{xr}
\usepackage{booktabs}
\usepackage{setspace}
\usepackage{cancel}
\usepackage[normalem]{ulem}
\usepackage{graphicx, xcolor}

\doublespacing


\usepackage{tikz}
\usetikzlibrary{arrows,automata,positioning}

\PassOptionsToPackage{hyphens}{url}
\usepackage{mathrsfs,hyperref}
\usepackage[]{amsmath}
\usepackage{amsthm}
\usepackage{fix-cm}
\usepackage[]{amssymb}
\usepackage[]{latexsym}
\usepackage[right]{eurosym}
\usepackage[T1]{fontenc}
\usepackage[]{graphicx}
\usepackage[]{epsfig}
\usepackage{fancyhdr}
\usepackage{pstricks}
\usepackage{multirow}
\usepackage{subcaption}
\usepackage[inline]{enumitem}

\allowdisplaybreaks

\makeatletter

\newcommand{\bbr}{\mathbb{R}}
\newcommand{\E}{\mathbb{E}}

\newcommand{\bbn}{\N_0}

\newcommand{\bbp}{\mathbb{P}}

\newcommand{\bbf}{\mathbb{F}}

\newcommand{\om}{\omega}


\newcommand{\fcal}{\mathcal{F}}

\newcommand{\dcal}{\mathcal{D}}

\newcommand{\bcal}{\mathcal{B}}

\usepackage{cite}
\def \abs#1{\left| #1 \right| }

\newcommand{\N}{\mathbb{N}}



\newcounter{modcount}
\newcommand{\modulo}[2]{%
\setcounter{modcount}{#1}\relax
\ifnum\value{modcount}<#2\relax
\else\relax
\addtocounter{modcount}{-#2}\relax
\modulo{\value{modcount}}{#2}\relax
\fi}
\newcommand{\tablepictures}[4][c]{\begin{tabular}[#1]{@{}c@{}}#2\vspace{0.5cm}\\(\alph{#4}) #3\end{tabular}}
\newcounter{gridsearch}
\newcommand{\tabpic}[2]{
    \stepcounter{gridsearch}
    \modulo{\thegridsearch}{2}
    \ifnum\value{modcount}=0
        \tablepictures[t]{#1}{#2}{gridsearch}\\[2.0cm]
    \else
        \tablepictures[t]{#1}{#2}{gridsearch}&~&
    \fi
}

\makeatother
\hyphenation{Glei-chung sto-cha-sti-sche Ge-burts-tags-kind ab-ge-ge-be-nen exi-stie-ren re-pre-sen-tation finanz-markt-aufsicht Modell-un-sicher-heit finanz-markt-risi-ken rung-gal- dier gering-sten} \arraycolsep1mm

\newtheorem{lemma}{Lemma}[section]
\newtheorem{proposition}[lemma]{Proposition}

\newtheorem{corollary}[lemma]{Corollary}

\newtheorem{example1}[lemma]{Example}
\newtheorem{rem1}[lemma]{Remark}
\newtheorem{assumption}[lemma]{Assumption}
\newtheorem{alg1}[lemma]{Algorithm}
\newtheorem{me1}[lemma]{Mechanism}

\newenvironment{remark}{\begin{rem1}\rm}{\end{rem1}}
\newenvironment{example}{\begin{example1}\rm}{\end{example1}}

\newenvironment{algorithm}{\begin{alg1}\rm}{\end{alg1}}

\usepackage{color}


\newcommand{\T}{\top}

\DeclareMathOperator*{\argmax}{arg\,max}

\newcommand\ind[1]{\mathbb{I}_{\{#1\}}}

\newcommand\bp{\bold{p}}
\newcommand\bb{\bold{b}}
\newcommand\bV{\bold{V}}
\newcommand\bK{\bold{K}}
\newcommand\bdf{\bold{f}}
\newcommand\bx{\bold{x}}
\newcommand\bL{\bold{L}}
\newcommand\bw{\bold{w}}
\newcommand\bd{\bold{d}}
\begin{document}

\title{\bf Decentralized Payment Clearing using Blockchain and Optimal Bidding}
\author{Hamed Amini \thanks{Robinson College of Business, Georgia State University, Atlanta, GA 30303, USA, email: {\tt hamini@gsu.edu}} \and
Maxim Bichuch \thanks{Department of Applied Mathematics and Statistics, Johns Hopkins University, Baltimore, MD 21218, USA, email: {\tt mbichuch@jhu.edu}. Work  is partially supported by NSF grant DMS-1736414. Research is partially supported by the Acheson J. Duncan Fund for the Advancement of Research in Statistics.} \and
Zachary Feinstein \thanks{Stevens Institute of Technology, School of Business, Hoboken, NJ 07030, USA. \tt{zfeinste@stevens.edu}}}
\date{\today}
\maketitle
\abstract{
In this paper, we construct a decentralized clearing mechanism which endogenously and automatically provides a claims resolution procedure.  This mechanism can be used to clear a network of obligations through blockchain.  In particular, we investigate default contagion in a network of smart contracts cleared through blockchain.  In so doing, we provide an algorithm which constructs the blockchain so as to guarantee the payments can be verified and the miners earn a fee.  We, additionally, consider the special case in which the blocks have unbounded capacity to provide a simple equilibrium clearing condition for the terminal net worths; existence and uniqueness are proven for this system.  Finally, we consider the optimal bidding strategies for each firm in the network so that all firms are utility maximizers with respect to their terminal wealths. We first look for a mixed Nash equilibrium bidding strategies, and then also consider Pareto optimal bidding strategies.
 The implications of these strategies, and more broadly blockchain, on systemic risk are considered.  

\bigskip

\noindent {\bf Keywords:} blockchain; decentralized finance; decentralized clearing; default contagion; systemic risk 
}

\newpage
\section{Introduction}\label{sec:intro}

Clearing mechanisms and default resolution procedures are fundamental to the functioning of the financial system.  Through such mechanisms, payments of obligations are made.  If a claim is defaulted upon, the obligee will record losses and may default on its own liabilities.  Such a chain of defaults is referred to as default contagion.  This is a specific form of systemic risk; the study of which has proliferated since the 2007-2009 financial crisis due to the large economic costs of that event.  In the last few years, decentralized finance (i.e., the replacement of traditional financial intermediaries with blockchain-based systems) has grown from the foundations of Satoshi Nakamoto's whitepaper proposing a decentralized cash system~\cite{bitcoin}.  In this paper, we consider a decentralized clearing system which endogenizes the default resolution process so as to automate the payment of obligations and resolve disputes.

The rest of the paper is organized as follows.  We complete this introduction with a review of relevant literature in Section~\ref{sec:intro-lit}, provide specific motivations for blockchain clearing systems in Section~\ref{sec:intro-motivation}, and highlight the key contributions of this work in Section~\ref{sec:intro-contrib}.  We provide a broad overview of the financial scope of our work within Section~\ref{sec:overview}. In Section~\ref{sec:centralized}, we present the model of financial network and default contagion in a centralized clearing system as introduced in the Eisenberg-Noe framework~\cite{EN01}. We further consider the clearing  with bankruptcy costs as in~\cite{RV13}  and collateral as presented in~\cite{veraart-compression,amini-compression}. In Section~\ref{sec:decentralized}, we consider a decentralized clearing system where, instead of a single central authority, a blockchain system for clearing is considered which is supported by a set of \emph{external} miners. In Section~\ref{sec:bidding}, we study the formulation of the optimal bidding strategies for each firm in the financial network as the result of a non-cooperative game and as a Pareto optimal strategy. Section~\ref{sec:conclusion} concludes.

\subsection{Literature Review}\label{sec:intro-lit}

Since the financial crisis 2007-2009, there has been an extensive research to study clearing systems and to understand how an economic shock propagates in the financial system through a variety of interconnections, such as interbank payments~\cite{EN01,RV13}, cross-holdings~\cite{EGJ14,AW_15,E07}, balance sheet-exposures~\cite{ACP14,BEST:2004,GK10,NYYA07}, and common asset holdings~\cite{chen2014asset,caccioli2014stability}. In particular, the Eisenberg-Noe clearing framework~\cite{EN01}  provides a simple versatile equilibrium model  in which, starting from a network of nominal liabilities, one assumes a proportional sharing rule (in the case of a default) and obtains in equilibrium a cleared network of actual payments. This model has been extended in different directions to cover new features. To name just a few, \cite{RV13,GY14} introduces bankruptcy costs and mark to market losses, \cite{AFM16,bichuch2018borrowing,feinstein2015illiquid,CFS05} consider the fire sales for a single and multiple illiquid assets. 


A crucial assumption in the Eisenberg-Noe framework~\cite{EN01} is the \emph{pro-rata repayment rule} (all obligations are repaid in proportion to the size of the nominal claims). As shown in~\cite{csoka2018decentralized,schaarsberg2018solving,ketelaars2020decentralization}, the proportionality rule is not critical and other bankruptcy rulesets could be utilized instead. In particular,~\cite{csoka2021axiomatization} provides an axiomatization of the proportional rule in financial networks. In~\cite{csoka2018decentralized}, the authors consider a discrete model for a class of decentralized clearing processes and  show that, under a sufficiently small unit of account, the clearing equities converge to the same values as the centralized clearing procedure. \cite{ketelaars2020decentralization} builds on the work of~\cite{schaarsberg2018solving} (which can be regarded as a centralized mechanism) and show that firms can reach a consensus on how to allocate the total estate when using a specific individual settlement allocation procedure ($\phi$-based mutual liability rules, in contrast to pro-rata assumption as in \cite{EN01}).


\subsection{Motivation}\label{sec:intro-motivation}

Decentralized finance, i.e.\ the replacement of traditional financial intermediaries with blockchain-based systems, is rapidly growing in size and scope.  The total value locked up (in dollar denominated terms) in decentralized finance has grown roughly 9,000\% between January 1, 2020 and January 1, 2022.\footnote{\url{https://defipulse.com/}}  In such systems, traditional financial instruments are replaced with smart contracts that are self-executing.  Broadly speaking, in this paper, we study clearing smart contracts in a blockchain-based framework.

One advantage of decentralized systems is their resilience to certain forms of systemic breakdown -- an understudied area of systemic risk within the financial engineering community.  In any system with centralized intermediaries, there is the risk of catastrophic failure of those firms; see, e.g.,~\cite{bignon2020failure} for an empirical study analyzing the failure of a derivatives central clearing counterparty (CCP) in Paris in 1974 and \cite{duffie2015resolution} for discussion on the resolution of failing CCPs. The blockchain-based distributed ledger mitigates this risk by the nature of it decentralization.   
Therefore, the use of a blockchain-based system can inherently increase financial stability by mitigating a form of catastrophic systemic failure. 

As such, a decentralized clearing system is of the utmost importance.  However, to the best of our knowledge, no such blockchain-based mechanisms has been rigorously studied previously.  Such a clearing mechanism needs to incorporate considerations for default resolution of smart contract claims.  (At the time of this writing, the ISDA Master Agreement for smart contracts\footnote{\url{https://www.isda.org/a/23iME/Legal-Guidelines-for-Smart-Derivatives-Contracts-ISDA-Master-Agreement.pdf}} does not provide specific guidance for the treatment of defaults on smart contracts which can cause disputes due to the self-executing nature of smart contracts.)
Notably, the primary difficulties in claims resolution is determining the available funds and disagreements over seniority of payments.  These notions are modeled in a stylized way in the centrally cleared frameworks of, e.g.,~\cite{EN01,RV13} by applying a fixed recovery rate to the entire payment following a proportional repayment scheme.
A blockchain clearing mechanism, on the other hand, resolves these two issues automatically. As such, the lack of default resolution encoded in the ISDA agreement for smart contracts does not pose a particular hindrance for consideration of smart contracts as a closed system.
This reduces frictions and speeds up the clearing procedure as the blockchain system automates the claims resolution procedure through the seniority structure determined by bids made by the obligees.  This occurs entirely endogenously without introducing additional constraints, e.g., the pro-rata repayment scheme.  In allowing institutions to (decentrally) determine their desired seniority, the blockchain clearing system endogenizes the effective recovery rate.

We wish to remark that other decentralized clearing procedures have been proposed (see, e.g.,~\cite{schaarsberg2018solving,ketelaars2020decentralization}), but none (to the best of our knowledge) have specifically studied the implications of the blockchain and distributed ledger on the actualized payments.  In fact, we find that explicitly modeling the blockchain allows for the selected claims resolution (herein a bidding strategy to specify the seniority structure and fees paid to the blockchain miners) to be determined in an endogenous manner.

\subsection{Primary contributions}\label{sec:intro-contrib}

The primary contributions of this paper are as follows:
\begin{enumerate}
\item 
We formulate a novel decentralized \emph{blockchain clearing mechanism} in Section~\ref{sec:blockchain}. This procedure is presented from the viewpoint of the miners who verify transactions and construct the blocks to be placed on the distributed ledger.  The fundamental elements of this clearing mechanism are the bids that accompany any obliged payment. These bids specify the fees associated with a nominal amount owed.  The miner who places the block containing those obligations into the ledger collects the fees.  Because miners wish to maximize their own profits, the fees act as a seniority structure on the obligations with larger bids more likely to be fulfilled (provided a default occurs). It is assumed that the fees are being subtracted from the payments, and therefore part of the verification of the bids includes the verification that the payment obligation will be paid.
In this way, this blockchain clearing mechanism is a decentralized clearing system that automates the claims resolution procedure and guarantees that the prescribed rules are followed.
In fact, we provide an algorithm (Algorithm~\ref{alg:bids}) that constructs the entire blockchain to clear these obligations given the bids placed for these obligations. In turn, given this construction of the blockchain, we also consider and solve the banks' optimization problems, of how to maximize their own utility and find the optimal bidding strategy. 
\item 
We, further, consider the formulation of the \emph{optimal bidding strategies} for each firm in the network and study the implications of these strategies, and more broadly blockchain, on systemic risk in Section~\ref{sec:bidding}. We show how the optimal bidding strategy for one obligation may depend on the bids submitted for all other obligations.  We then consider two distinct approaches to forming the optimal bidding strategies: \begin{enumerate*} \item as the result of a complex non-cooperative game; and \item as a Pareto optimal strategy. \end{enumerate*} We show that a pure strategy equilibrium need not exist, while there always exists a mixed strategy Nash equilibrium bidding strategy for any blockchain clearing system with finite discretization. We also show that  there exists a Pareto optimal bidding strategy for any blockchain clearing system with finite discretization. By construction, these optimal strategies provide a heuristic (firm-level) metric of systemic risk.  Specifically, if the system is unstressed then all bids will be optimized at the minimal fee levels. Whereas, under stresses the firms will need to raise their fees so as to optimize the seniority structure of payments.  Thus the larger the optimal fees, the larger the expected stresses to the system.
\item 
We expand on the blockchain clearing mechanism to consider the \emph{terminal net worths} (i.e., realized assets minus nominal liabilities) for all institutions in the financial system following the prescribed bidding strategy.  In particular, if the blocks have sufficiently large capacity, we find an equilibrium formulation satisfied by these terminal net worths.  This equilibrium formulation can be viewed comparably to the Eisenberg-Noe framework with seniority as studied in, e.g.,~\cite{E07}; however, the system presented herein endogenizes the costs for the seniority structure which, as far as the authors are aware, has not previously been studied.  For this equilibrium formulation of the blockchain system, we show that under weak regularity conditions there is a unique clearing net worths vector and, generally, there always exist greatest and least clearing net worths.  This equilibrium formulation provides an efficient computational procedure which can be used to approximate the blockchain clearing solution. 
\end{enumerate}

\section{Overview of Financial Setting}\label{sec:overview}
The setup for our paper assumes a network of banks with liabilities. We assume that the liabilities might be owed in physical or on-chain assets. At some point, an event occurs that requires an immediate liquidation of the securities. Such an event can be a shock to the banking system, the eventual repayment expiration or maturity time of the liabilities or assets, or the time when not all banks are able to fully repay their liabilities. Whatever the cause, the initial setup is that a network of interbank liabilities needs to be cleared. 

The typical practice so far in the event of nonpayment has been to declare bankruptcy and let the courts sort it out. This is a time consuming and costly solution. Herein we propose to let the banks themselves sort it out, using a decentralized clearing mechanism. The first step in this process is to tokenize all the assets in the system (if not already tokenized). Our model is semi-static insofar as the system evolves over time but is determined by decisions made prior to the clearing time. In any case, the procedure itself should be very fast.  As such, even though the assets must be frozen during the clearing process, the banks will have access to the resulting cleared positions soon thereafter (e.g., at the start of the next business day with overnight clearing). 

The key idea of our decentralized clearing algorithm is the bidding. All banks must bid on the assets they are owed. Realistically the bids are discrete -- though for notational convenience we will model them as continuous bids -- with the fee being proportional to the amount owned.\footnote{It is trivial to accomodate absolute fees instead.  Such a modification results in a more cumbersome formulation of the clearing problem.} Once the bids are placed, the miners clear the debts in a systematic manner. Specifically, the miners verify that a bank has sufficient funds to discharge the debt, in which case the miners place the bid on the block, and pocket the fee. It is important to note, that the fee is paid from the debt being processed. This aligns the incentives of the miners with the verification task; specifically, verification of bank funds to pay and discharge liabilities also guarantees the payment of the fees to the miner.
The block might have limits (e.g., a limit on the number of transactions processed within it), in which case the miners seek to maximize the payoff from the fees that they collect given the availability of funds.

We wish to highlight that the proposed algorithm is safe, since at all times either the bank has its assets or the appropriate part of the liability has been discharged once the asset was transferred. As is common in the decentralized universe, custodial services are not required. We also assume for convenience that once a block has been constructed, and verified, it has also been confirmed, though it is possible to relax this assumption to consider the unspent funds for the next few block(s), i.e.\ until final confirmation.

\section{Default Contagion in Centralized Clearing Systems}\label{sec:centralized}

Consider a financial system comprised of $n$ banks and set $[n] := \{1,\dots, n\}$. 
Throughout this work we will assume the stylized balance sheet considered in the Eisenberg-Noe framework~\cite{EN01}.  As such,  each bank $i$ has some liquid assets (cash) $x_i \geq 0$ and interbank assets $\sum_{j\in [n]} L_{ji}$ where bank $j$ owes $L_{ji} \geq 0$ to bank $i$.  On the other side of the balance sheet, bank $i$ has interbank liabilities $\sum_{j\in [n]} L_{ij}$.\footnote{ External obligations can be included by introducing a ``fictitious bank'' $0$ to the financial system.  This is explicitly introduced within Example~\ref{ex:pareto} later in this work.}
Within this setting no bank has any obligation to itself, i.e., $L_{ii} = 0$ for every bank $i$.  
Additionally, for ease of notation, define the relative liabilities from bank $i$ to $j$ as $\pi_{ij} := L_{ij} / \sum_{k \in [n]} L_{ik}$ if $\sum_{k \in [n]} L_{ik} > 0$ and $\pi_{ij} = 0$ otherwise.
Herein, and comparably to the models presented in~\cite{veraart-compression,amini-compression}, we consider a setting in which each of these interbank obligations is collateralized at level $\mu \in [0,1]$ and with bankruptcy costs defined by a recovery rate $\alpha \in [0,1]$.

Within this extended Eisenberg-Noe framework, the primary question is one of network clearing.  Specifically, how much does each bank pay towards its liabilities and what are the resulting net worths.  The (extended) Eisenberg-Noe system assumes the following financial rules.
\begin{enumerate}
\item\label{ll} \emph{Limited liabilities}: no firm pays more obligations than it has realized in assets. 
\item\label{pd} \emph{Priority of debt claims}: no equity is accumulated by the shareholders of a firm unless all firm debts are paid in full.
\item\label{cr} \emph{Collateral rehypothecation}: the margin account is fully available for the payment of debts.
\item\label{fa} \emph{Fractional recovery of assets}: in the event of a default, the collateral is recovered in full but all other assets are subject to the recovery rate.
\item\label{pr} \emph{Pro-rata repayment}: all obligations are repaid in proportion to the size of the nominal claims. 
\end{enumerate}
This is a \emph{centralized} clearing system due to the use of the overarching bankruptcy rules for defaulting firms (i.e., fractional recovery of assets) and the pro-rata repayment.  For instance, and specifically, pro-rata repayment only occurs following the dictates of some external authority which imposes such a rule. This external authority could be a clearinghouse or legal structure in bankruptcy courts.

Under this ruleset, the payments by bank $i$ follow the equilibrium clearing procedure 
\footnote{Note that the assumption on the collateral here differs slightly from the treatment in~\cite{veraart-compression,amini-compression}.  Herein for the purposes of comparison to the decentralized clearing procedure, since the assets are frozen and the process is (nearly) instantaneous, it is without loss of generality that we can assume that collateral is used immediately towards the repayment of obligations.} 
\begin{equation}\label{eq:centralized}
p_i = \Phi_i^*(\bp) := \mu\sum_{j \in [n]} L_{ij} + \begin{cases} (1-\mu)\sum_{j \in [n]} L_{ij} &\text{if } x_i + \sum_{j\in [n]} \pi_{ji}p_j \geq (1-\mu)\sum_{j \in [n]} L_{ij}, \\ \alpha\left[x_i + \sum_{j \in [n]} \pi_{ji}p_j\right] &\text{if } x_i + \sum_{j \in [n]} \pi_{ji}p_j < (1-\mu)\sum_{j \in [n]} L_{ij}. \end{cases}
\end{equation}
Notably, this system coincides with that of~\cite{RV13} if there is no collateralization (i.e., $\mu = 0$) and~\cite{EN01} if, additionally, there is full recovery in case of default (i.e., $\alpha = 1$).
\begin{proposition}\label{prop:centralized}
There exists a greatest and least clearing payments vector $\bp^\uparrow \geq \bp^\downarrow$ to the clearing procedure $\Phi^*$ defined in~\eqref{eq:centralized} for any collateralization level $\mu \in [0,1]$ and recovery rate $\alpha \in [0,1]$.
\end{proposition}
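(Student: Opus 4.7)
The plan is to prove existence of greatest and least fixed points of $\Phi^*$ via Tarski's fixed point theorem. This requires showing that $\Phi^*$ is an order-preserving self-map on a complete lattice.

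First I would restrict attention to the order interval $\mathcal{P} := \prod_{i \in [n]}[0, \bar{p}_i]$ with $\bar{p}_i := \sum_{j \in [n]} L_{ij}$, equipped with the componentwise partial order. This set is a compact, nonempty, complete lattice. I would then verify that $\Phi^*(\mathcal{P}) \subseteq \mathcal{P}$: the lower bound $\Phi_i^*(\bp) \geq \mu \sum_j L_{ij} \geq 0$ is immediate from nonnegativity of all quantities; for the upper bound, in the solvent branch $\Phi_i^*(\bp) = \sum_j L_{ij} = \bar{p}_i$, while in the default branch the inequality $x_i + \sum_j \pi_{ji}p_j < (1-\mu)\sum_j L_{ij}$ together with $\alpha \leq 1$ yields $\Phi_i^*(\bp) < \mu \sum_j L_{ij} + (1-\mu)\sum_j L_{ij} = \bar{p}_i$.

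Next I would prove monotonicity, i.e., $\bp \leq \bp' \Rightarrow \Phi^*(\bp) \leq \Phi^*(\bp')$. Fix $i$ and let $A(\bp) := x_i + \sum_j \pi_{ji} p_j$, which is nondecreasing in $\bp$. Three cases arise. If both $\bp$ and $\bp'$ lie in the solvent region, $\Phi_i^*$ takes the same value $\sum_j L_{ij}$ on both. If both lie in the default region, then $\Phi_i^*(\bp') - \Phi_i^*(\bp) = \alpha[A(\bp') - A(\bp)] \geq 0$. The only delicate case is the transition: $\bp$ in the default region but $\bp'$ in the solvent region. Here I must check that the jump of $\Phi_i^*$ across the threshold is upward. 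At the switching point $A = (1-\mu)\sum_j L_{ij}$, the default expression equals $\mu \sum_j L_{ij} + \alpha(1-\mu)\sum_j L_{ij}$, which is $\leq \sum_j L_{ij}$ precisely because $\alpha \leq 1$. Since $\Phi_i^*(\bp) \leq \mu \sum_j L_{ij} + \alpha(1-\mu)\sum_j L_{ij} \leq \sum_j L_{ij} = \Phi_i^*(\bp')$, monotonicity holds across the boundary as well. This verification of the sign of the jump is the one subtle step; it is essentially the reason $\alpha \in [0,1]$ (rather than $\alpha > 1$) is assumed.

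Having established that $\Phi^* : \mathcal{P} \to \mathcal{P}$ is order-preserving on a complete lattice, Tarski's fixed point theorem guarantees that the fixed-point set is itself a nonempty complete lattice. In particular, there exist a greatest fixed point $\bp^\uparrow$ and a least fixed point $\bp^\downarrow$ with $\bp^\uparrow \geq \bp^\downarrow$, which are the greatest and least clearing payment vectors claimed. The main conceptual hurdle, as noted, is dealing with the discontinuity of $\Phi^*$ between the solvent and default branches; because Tarski's theorem requires only monotonicity (not continuity), this obstacle is resolved by the jump-direction computation above rather than by any Brouwer/Schauder-type argument.
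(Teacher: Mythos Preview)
Your proposal is correct and follows exactly the same approach as the paper: apply Tarski's fixed point theorem to the monotone self-map $\Phi^*$ on the interval $[0,\bar{\bp}]$ with $\bar p_i = \sum_{k} L_{ik}$. The paper simply asserts this as a ``trivial application'' without spelling out the self-map and monotonicity verifications you provide.
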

\begin{proof}
This result follows from a trivial application of Tarski's fixed point theorem on $\Phi^*: [0,\bar {\bp}] \to [0,\bar \bp]$, where $\bar{\bp}=(\bar{p}_1, \dots, \bar{p}_n)^\T$ and $\bar{p}_i:= \sum_{k \in [n]} L_{ik}$.
\end{proof}
Given a clearing payments vector $\bp=(p_1,\dots,p_n)^\T$, the resulting net worths $K_i$ for bank $i$ are:
\[K_i = x_i + \sum_{j \in [n]} \pi_{ji}p_j - \sum_{j \in [n]} L_{ij},\]
with cash account $V_i = K_i^+ := \max\{0,K_i\}$, the positive part of $K_i$. Note, that the net worth is defined as assets minus the liabilities, and can be negative. This accounts for the shortfall in case of default of bank $i$. This is opposed to the cash account, which is equal to the net worth if the latter is non negative, and otherwise the bank is in default and the cash is zero by definition.

\section{Default Contagion in Decentralized Clearing Systems}\label{sec:decentralized}


We now want to consider what happens in a decentralized system cleared by miners in a blockchain. We continue to assume the same financial network of banks with liabilities, each of these obligations is collateralized by $\mu \in [0,1]$ as presented in the centralized clearing system above. 

\subsection{Blockchain Clearing}\label{sec:blockchain}
The decentralized clearing systems of interest within this work continue to follow \begin{enumerate*} \item \emph{limited liabilities}, \item \emph{priority of debt claims}, and \item \emph{collateral rehypothecation}. \end{enumerate*}  
However, due to the decentralized nature of clearing under consideration herein, no authority can or will enforce bankruptcy laws which, allow for notions such as the fractional recovery of assets. Similarly, no authority can impose the pro-rata repayment scheme for debt repayment on the financial system.  

Instead of a single central authority, herein a blockchain system for clearing is considered, which is supported by a set of \emph{external} miners, i.e., we assume that no bank within this system also acts as a miner of the clearing blockchain.
Following the stylized structure of a blockchain system, all interbank assets $L_{ji},~ i,j\in [n],$ are paired with a bidding strategy $b_{ji}: [0,1] \to \bbr_+ \cup \{\infty\}$ that encodes the fees bank $i$ is willing to pay to a miner to place the obligation from bank $j$ onto the blockchain.  
More specifically, $b_{ji}(f)$ denotes the amount of the obligation from bank $j$ to $i$ for which a \emph{proportional} fee of $f$ is applied. 
Any bidding strategy must account for the entirety of the (uncollateralized) obligation, i.e., $\int_0^1 b_{ji}(f)df = (1-\mu)L_{ji}$.\footnote{Any bidding strategy with $\int_0^1 b_{ji}(f)df < (1-\mu)L_{ji}$ can be augmented with a mass of bids placed at the $0$ fee level without loss of generality.} In Section~\ref{sec:bidding} we will assume that the fees are discrete. However, for notational convenience we will write it using a continuous notation with an integral. The discrete bidding can then be recovered when $b_{ij}$ take infinite value, and we think of it as Dirac delta function.  To account for the possibility of discrete bidding, we will consider the notation \[\int_{(f^*)^+}^1 g(f)b_{ji}(f)df = \lim_{\epsilon \searrow 0} \int_{f^*+\epsilon}^1 g(f)b_{ji}(f)df\] for any bounded function $g: [0,1] \to [0,1]$ and any pair of banks $i,j \in [n]$.

This bidding structure completes the stylized rules for \emph{decentralized clearing} through a blockchain system.  In addition to rules~\eqref{ll}-\eqref{cr}, the clearing system satisfies the following rules.
\begin{enumerate}\setcounter{enumi}{3}
\item\label{pf} \emph{Post-payment fees}: all fees are paid from the incoming cash flows based on the bidding strategy. Therefore, the miners first verify that the payments can be made by the bank before putting it in the block. In other words, all payments on the block are simultaneously verified for the ledger to guarantee the miners get their fees.
\item\label{np} \emph{No payment rehypothecation}: by virtue of blockchain ledgers, all obligations must be paid with the current cash account and cannot immediately reuse an incoming payment to satisfy an obligation in the same block.\footnote{No payment rehypothecation comes from the idea that only verified transactions are considered valid. This assumption can be relaxed with the use of, e.g., unspent transaction outputs [UTXOs] in which all valid asset transfers are immediately accessible by the recipient.  We comment later in this work on how such a relaxation of this assumption impacts the mathematical modeling.}
\item\label{gb} \emph{Greedy block construction}: the miners will select the transactions with the highest fees to attach to the current block. This greedy approach is applied as each miner cannot guarantee it will ``win'' the consensus game on additional blocks.

\item\label{bs} \emph{Limited block capacity}: each block has a limited capacity $C^\# > 0$ on the number of transactions. 
\end{enumerate}

\begin{assumption}\label{assump:initial}
With rehypothecation, and an assessed fee of $f_R \in [0,1]$ on the collateral, we assume $x_i + \mu\sum_{j \in [n]} \left[(1-f_R)L_{ji} - L_{ij}\right] \geq 0$ for all $i\in[n]$ so that no bank defaults directly from the posting of the initial margins.
\end{assumption}

Mathematically, these decentralized clearing rules -- given the bidding strategies $\bb=(b_{ij})_{i,j\in [n]}$ -- allow for the construction of the blockchain to clear the interbank obligations.  Note that, due to the limited block capacity and no payment rehypothecation rules, multiple blocks are typically required to finish clearing the financial system.  Due to the sequential nature of the blockchain, the block number $t$ can be thought of as a (discrete) time point.
The actualized payments on a specific block $t$, following the greedy block construction in which the collected fees by the winning miner are maximized, are the bids:
\begin{align*}
\widehat \bb^t &\in \argmax \left\{ \sum_{i ,j \in [n]} \int_0^1 f b_{ij}^t(f)df \; | \; \bb^t \in \bcal_t(\widehat \bb^{1},\dots,\widehat \bb^{t-1})\right\} ,
\end{align*}
where $\bcal_t$ denotes the constraints that each block must satisfy, i.e.,
\begin{align*}
\bcal_t((\widehat \bb^s)_{s < t}) &= \left\{
\bb^t ~~\Bigg|~~
\begin{array}{l}
0\le b^t_{ij}(f) \le b_{ij}(f) - \sum_{s < t} \widehat b^s_{ij}(f) \mbox{ for all } i,j,\in[n],f\in[0,1]  ,\\
 \abs{\bigl\{(i,j)\in[n]^2 \; | \; \int_0^1 b_{ij}^t(f)df > 0\bigr\}} \leq C^\#,\\
  \sum_{j \in [n]} \int_0^1 b_{ij}^t(f)df \leq V_i^{t-1} \; \text{for all} \; {i \in [n]} \end{array}\right\} ,
\end{align*}
and $V_i^{t-1}$ denotes the cash account of bank $i$ at time $t-1$.  
That is, the bids accepted in block $t$ have not been accepted in any prior block, the block capacity is not exceeded, and the payments being placed onto the new block can be satisfied with the current cash account without rehypothecation of the simultaneous payments.\footnote{Under the relaxation of no payment rehypothecation~\eqref{np} using, e.g., UTXOs the final constraint of $\bcal_t((\widehat \bb^s)_{s < t})$ can be rewritten to allow for rehypothecation, i.e., $\sum_{j \in [n]} \int_0^1 b_{ij}^t(f)df \leq V_i^{t-1} + \sum_{j \in [n]} \int_0^1 (1-f)b_{ji}^t(f)df$.}

Importantly, as in the construction of $\bcal_t$ to determine the blocks, the cash account of each bank is needed to verify that payments can be made without rehypothecation.  With the actualized payments $\widehat \bb$ based on the greedy block construction over time, the realized cash account of bank $i$ at the completion of processing block $t$ is:
\begin{align}
\label{eq:Vt} V_i^t &= V_i^{t-1} + \sum_{j \in [n]} \int_0^1 \left[(1-f)\widehat b_{ji}^t(f) - \widehat b_{ij}^t(f)\right]df, \\
\label{eq:V0} V_i^0 &= x_i + \mu \sum_{j \in [n]} \left[(1-f_R)L_{ji} - L_{ij}\right].
\end{align}
We wish to note that, due to the constraints imposed, we have the nonnegativity of the cash account, $V_i^t \geq 0$ for every bank for all blocks $t$.

\begin{lemma}\label{lemma:unique}
There exists a clearing solution vector $(\bV^t)_{t \in \bbn}$ following~\eqref{eq:Vt} for all times $t \in \bbn := \N \cup \{0\}$.
\end{lemma}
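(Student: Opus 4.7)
My plan is to construct $(\bV^t)_{t \in \bbn}$ and the actualized bids $(\widehat{\bb}^t)_{t \geq 1}$ recursively by induction on $t$. The base case is immediate: $\bV^0$ is given explicitly by~\eqref{eq:V0}, and Assumption~\ref{assump:initial} ensures $V_i^0 \geq 0$ for every $i \in [n]$. Given $\widehat{\bb}^1,\dots,\widehat{\bb}^{t-1}$ and $\bV^0,\dots,\bV^{t-1}$ with $\bV^{t-1}$ componentwise non-negative, the problem reduces to verifying (i) that the argmax defining $\widehat{\bb}^t$ is attained in $\bcal_t(\widehat{\bb}^1,\dots,\widehat{\bb}^{t-1})$, and (ii) that the resulting $\bV^t$ from~\eqref{eq:Vt} is again componentwise non-negative, so the recursion may continue.

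The main step is (i). The feasible set is non-empty since the zero bid vector trivially satisfies all three constraints, and every feasible bid has total mass bounded above by $(1-\mu)L_{ji}$ through the residual constraint. The only non-convexity is the cardinality condition $\bigl|\{(i,j) : \int_0^1 b_{ij}^t(f)df > 0\}\bigr| \leq C^\#$; but because $[n]^2$ has only $n^2$ elements, I decompose $\bcal_t = \bigcup_{S} \bcal_t^S$ over the finitely many subsets $S \subseteq [n]^2$ with $|S| \leq C^\#$, where $\bcal_t^S$ forces bids indexed outside $S$ to vanish. Viewing each $b_{ij}^t(f)df$ as a non-negative Borel measure on $[0,1]$ dominated by the (finite) residual measure $b_{ij}(f)df - \sum_{s<t}\widehat{b}_{ij}^s(f)df$, each $\bcal_t^S$ is a uniformly mass-bounded, weak-$*$ closed, convex set of tuples of finite measures, hence weak-$*$ compact by Banach--Alaoglu. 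The objective $\sum_{i,j}\int_0^1 f\, b_{ij}^t(f)df$ is linear and weak-$*$ continuous (the test function $f \mapsto f$ is continuous on the compact interval $[0,1]$), so a maximizer exists on each $\bcal_t^S$; comparing the finitely many optimal values yields a maximizer in $\bcal_t$.

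For step (ii), the third constraint in $\bcal_t$ immediately gives $\sum_{j \in [n]}\int_0^1 \widehat{b}_{ij}^t(f)df \leq V_i^{t-1}$, while the inflow term $\sum_{j \in [n]} \int_0^1 (1-f)\widehat{b}_{ji}^t(f)df$ is non-negative because $f \in [0,1]$ and all accepted bids are non-negative. Substituting these two inequalities into~\eqref{eq:Vt} yields $V_i^t \geq V_i^{t-1} + 0 - V_i^{t-1} = 0$, closing the induction and delivering the desired clearing sequence. The most delicate point I expect to handle carefully is the compactness argument in the continuous-bid formulation; in the discrete-fee setting of Section~\ref{sec:bidding} this collapses to an elementary linear program over a finite-dimensional polytope subjected to an integer cardinality side condition, so existence is automatic, but in the continuous formulation the measure-theoretic view together with Banach--Alaoglu is the natural tool.
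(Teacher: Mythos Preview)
Your proof is correct, but it follows a genuinely different route from the paper's. The paper proves Lemma~\ref{lemma:unique} purely constructively by pointing to Algorithm~\ref{alg:bids}: for each block $t$ and each admissible index set $J\subseteq I$ with $|J|\leq C^\#$, the algorithm writes down an explicit candidate maximizer $\widehat\bb^{t,J}$ by (a) computing for every payer $i$ the threshold fee $\bar f_i^{t,J}$ at which its cash account is just exhausted, (b) accepting all residual bids strictly above that threshold, and (c) allocating the remaining cash pro-rata at the threshold. Taking the best $J$ then yields a concrete element of the argmax, and~\eqref{eq:Vt} gives $\bV^t$.

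Your argument, by contrast, never exhibits the maximizer. You decompose $\bcal_t$ over the finitely many support sets $S$, reinterpret the feasible bids as finite Borel measures dominated by the residual measure, invoke Banach--Alaoglu for weak-$*$ compactness of each $\bcal_t^S$, and use weak-$*$ continuity of the linear objective $b\mapsto\sum_{i,j}\int_0^1 f\,db_{ij}(f)$. This is a sound abstract existence proof and, as you note, collapses to a finite-dimensional linear program in the discrete-fee setting.

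What each approach buys: the paper's constructive route is what actually drives the later results---Algorithm~\ref{alg:bids} supplies the monotone residual sequence $(\bar\bb^t)$ used in Proposition~\ref{prop:limit} and the threshold structure $\bar f_i^t$ used in Proposition~\ref{prop:limit-K}, and it fixes a specific tie-breaking rule (pro-rata at the threshold) that makes the clearing solution well-defined rather than merely existent. Your compactness argument is cleaner for the bare existence statement and does not rely on verifying that the greedy threshold construction is in fact optimal over each $\bcal_t^S$ (a step the paper leaves implicit), but it would need to be supplemented by the explicit algorithm before the subsequent propositions can be proved.
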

\begin{proof}
We will approach this proof constructively.  Specifically, a process $(\bV^t)_{t \in \bbn}$ with $\bV^t=(V_1^t, \dots, V_n^t)^\T$ following~\eqref{eq:Vt} can be constructed by following Algorithm~\ref{alg:bids}.
\end{proof}
The proof of Lemma~\ref{lemma:unique} relies on the following constructive algorithm to build each block of the blockchain.  Within the optimization problem for the realized bids $\widehat \bb^t$, there may be ties over which the miners are indifferent.  As such, generally, there does not exist a unique clearing solution $(\bV^t)_{t \in \bbn}$.  Within Algorithm~\ref{alg:bids}, we consider a pro-rata assumption as in~\cite{EN01} though other rulesets could be utilized instead.\footnote{Other rulesets include, e.g., the constrained equal-awards rule, the constrained equal-losses rule, and a Talmud rule for division of assets.  We refer the interested reader to~\cite{schaarsberg2018solving, ketelaars2020decentralization}.}  This algorithm, first, finds the (as yet) unpaid liabilities; second, loops through all feasible block configurations to find that which maximizes the collected fees; third, implements that block before continuing back to step 1 until no more blocks can be constructed. 
 Within the below algorithm, for simplicity of notation, let $J(i) := \{j\in[n] \; | \; (i,j) \in J\}$ be the projection of $J \subseteq [n]^2$ on $i \in [n]$.
\begin{algorithm}\label{alg:bids}
Initialize $t = 0$ with $V_i^0 = x_i + \mu\sum_j \left[(1-f_R)L_{ji} - L_{ij}\right]$. 
\begin{enumerate}
\item\label{step2} Increment $t = t+1$.
\item\label{barb} Let $\bar b_{ij}^t := b_{ij} - \sum_{s < t} \widehat b_{ij}^s$ for every pair of banks $i,j\in[n]$.
\item\label{terminate} If $I := \{(i,j)\in[n]^2 \; | \; \int_0^1 \bar b_{ij}^t(f)df > 0 \, , \, V_i^{t-1} > 0\} = \emptyset$ then terminate with $\widehat \bb^s \equiv 0$ and $\bV^s = \bV^{t-1}$ for every $s \geq t$.
\item\label{loop} For each $J \subseteq I$ such that $\abs{J} \leq C^\#$ do: 
    \begin{enumerate}
    \item For every bank $i$ find the minimal fee level $\bar f_i^{t,J}$ that bank $i$ can support: \[\bar f_i^{t,J} = \inf\bigl\{\bar f \in [0,1] \; | \; \sum_{j \in J(i)} \int_{\bar f}^1 \bar b^t_{ij}(f)df \leq V_i^{t-1}\bigr\}.\]
    \item Define the pro-rata proportions $\pi_{ij}^{t,J}$ by:
        \begin{align*}
        \pi_{ij}^{t,J} &= \begin{cases} (1- \bar f_i^{t,J})\frac{\bar b^t_{ij}(\bar f_i^{t,J})\ind{j \in J(i)}}{\sum_{k \in J(i)} \bar b^t_{ik}(\bar f_i^{t,J})} &\text{if } \sum_{k \in J(i)} \bar b^t_{ik}(\bar f_i^{t,J}) > 0, \\ 0 &\text{if } \sum_{k \in J(i)} \bar b^t_{ik}(\bar f_i^{t,J}) = 0. \end{cases} 
        \end{align*}
    \item Construct the optimal fee structure for the miners such that only obligations between counterparties in $J$ are considered: 
    \[\widehat b_{ij}^{t,J}(f) := \begin{cases} \bar b^t_{ij}(f)\ind{f > \bar f_i^{t,J}} + \pi_{ij}^{t,J}\left[V_i^{t-1} - \sum\limits_{k \in J(i)} \int_{(\bar f_i^{t,J})^+}^1 \bar b^t_{ik}(f)df\right]\delta_{\bar f_i^{t,J}}(f) &\text{if } (i,j) \in J, \\ 0 &\text{if } (i,j) \not\in J. \end{cases}\]
    \end{enumerate}
\item\label{stepMax} Arbitrarily select $$J^t \in \argmax_{J \subseteq I} \Bigl\{\sum_{(i,j) \in J} \int_0^1 f \widehat b_{ij}^{t,J}(f)df \; | \; \abs{J} \leq C^\#\Bigr\}.$$ 
\item\label{construct} Construct the realized bids $\widehat \bb^t := \widehat \bb^{t,J^t}$ 
 and update the cash account of each bank $i$ by $$V_i^t = V_i^{t-1} + \sum_{j \in [n]} \int_0^1 \left[(1-f)\widehat b_{ji}^t(f) - \widehat b_{ij}^t(f)\right]df.$$
\item Return to Step~\eqref{step2}.
\end{enumerate}
\end{algorithm}

For each block $t$, Algorithm~\ref{alg:bids} first computes at step~\eqref{barb}, for every pair of banks $i,j\in[n]$, the bids $\bar b_{ij}^t$ remaining from the total $b_{ij}$ for obligation of bank $i$ to $j$.  That is, $\bar \bb^t$ are those bids which have not already cleared prior to block $t$ at every fee level $f\in[0,1]$. At step~\eqref{terminate}, this algorithm finds the set $I$ of all pair of banks $(i,j)\in [n]^2$ with a positive uncleared obligation and for which the obligor has a positive cash account. If this set is empty, then obviously the algorithm terminates as no more payments can be made. At step~\eqref{loop}, for each potential subset (i.e., subgraph) $J\subseteq I$ with $\abs{J} \leq C^\#$, this algorithm computes, the minimal fee level $\bar f_i^{t,J}$ that bank $i$ can support and the resulting bids that would be realized provided only obligations between counterparties in this subgraph are considered.  (If discrete bids are permitted, the pro-rata proportions $\pi_{ij}^{t,J}$ for the remaining bids $\bar b_{ij}^t$ cleared at the fee level $\bar f_i^{t,J}$ is also constructed). At step~\eqref{stepMax}, the algorithm optimally selects the subgraph $J$ which maximizes the fees collected by the winning miner. Finally, at step~\eqref{construct}, the algorithm implements block $t$ and updates the cash account of each bank.

We wish to conclude our discussion of blockchain clearing systems by considering the limiting behavior of the cash account $\bV^t$ as the block number $t$ tends to infinity.
\begin{proposition}\label{prop:limit}
Let $(\bV^t)_{t\in\bbn}$ be constructed from Algorithm~\ref{alg:bids}.  Then the limit $\lim_{t \to \infty} V_i^t$ exists for every bank $i$.
\end{proposition}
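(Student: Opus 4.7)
The plan is to show that $V_i^t$ is, up to the constant $V_i^0$, the difference of two monotonically non-decreasing sequences that are each bounded above, and hence convergent.

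First, I would unwind the recursion~\eqref{eq:Vt} by iterating to obtain, for every bank $i$ and every $t \in \bbn$,
\[
V_i^t = V_i^0 + \sum_{j \in [n]} \int_0^1 (1-f)\, B_{ji}^t(f)\, df \;-\; \sum_{j \in [n]} \int_0^1 B_{ij}^t(f)\, df,
\]
where $B_{ji}^t(f) := \sum_{s=1}^t \widehat b_{ji}^s(f)$ is the cumulative realized bid for the obligation from $j$ to $i$ through block $t$. Interpreted as a (sub)measure, $B_{ji}^t$ is well defined even when the bid $b_{ji}$ contains atoms, since the admissible bid functions in $\bcal_s$ are pointwise dominated by the residual $\bar b_{ji}^s = b_{ji} - B_{ji}^{s-1}$.

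Second, the first defining constraint in $\bcal_t$ gives $0 \leq \widehat b_{ji}^s \leq \bar b_{ji}^s$, so summing from $s=1$ to $t$ yields $0 \leq B_{ji}^t \leq b_{ji}$, and $B_{ji}^t$ is non-decreasing in $t$ (pointwise in $f$, or equivalently as a monotone net of measures). Consequently, for each fixed pair $i,j \in [n]$,
\[
\int_0^1 (1-f)\, B_{ji}^t(f)\, df \quad\text{and}\quad \int_0^1 B_{ij}^t(f)\, df
\]
are non-decreasing sequences bounded above by $\int_0^1 (1-f)\, b_{ji}(f)\, df \leq (1-\mu) L_{ji}$ and $\int_0^1 b_{ij}(f)\, df = (1-\mu) L_{ij}$, respectively. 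By the monotone convergence theorem, each of these sequences converges to a finite limit as $t \to \infty$.

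Third, since $V_i^t$ is a finite sum (over $j \in [n]$) of convergent sequences plus the constant $V_i^0$, the limit $\lim_{t \to \infty} V_i^t$ exists and is finite; the nonnegativity $V_i^t \geq 0$ noted after~\eqref{eq:V0} ensures that the limit lies in $[0,\infty)$. The main subtlety is only notational: one must be careful to interpret bids carrying Dirac masses (through the $\int_{(f^*)^+}^1$ convention) as measures so that the domination $B_{ji}^t \leq b_{ji}$ and monotone convergence apply uniformly, but once this is set up the argument is routine and requires no appeal to the specific tie-breaking choices made in Steps~\eqref{step2}--\eqref{stepMax} of Algorithm~\ref{alg:bids}.
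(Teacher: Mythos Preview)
Your proof is correct and follows essentially the same approach as the paper: the paper tracks the \emph{remaining} bids $\bar b_{ij}^t = b_{ij} - \sum_{s<t}\widehat b_{ij}^s$ (a decreasing sequence) rather than your cumulative realized bids $B_{ij}^t$ (an increasing sequence), but since $B_{ij}^t = b_{ij} - \bar b_{ij}^{t+1}$ these are equivalent, and both arguments conclude via monotone convergence applied to the resulting telescoped expression for $V_i^t$.
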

\begin{proof}
First, consider the sequence of remaining bids $(\bar b_{ij}^t)_{t \in \bbn} \subseteq L^1([0,1])$ for any pair of banks $i,j$ from Algorithm~\ref{alg:bids}.  By construction $0 \leq \bar b_{ij}^t(f) \leq \bar b_{ij}^{t-1}(f)$ for any $f \in [0,1]$.  Therefore, by monotone convergence there exists a pointwise limit $\bar b_{ij}^* := \lim_{t \to \infty} \bar b_{ij}^t$ for any pair of banks $i,j$.
We will now prove the convergence of $V_i^t$ by noting that $$V_i^t = V_i^0 + \sum_{j \in [n]} \int_0^1 \left[(1-f)[b_{ji}(f) - \bar b^t_{ji}(f)] - [b_{ij}(f) - \bar b^t_{ij}(f)]\right]df$$ for any time $t$ and monotone convergence implies \[\lim_{t \to \infty} V_i^t = V_i^0 + \sum_{j \in [n]} \int_0^1 \left[(1-f)[b_{ji}(f) - \bar b^*_{ji}(f)] - [b_{ij}(f) - \bar b^*_{ij}(f)]\right]df.\]  
\end{proof}

\subsection{Limiting Net Worths}\label{sec:limit}

As considered implicitly with Proposition~\ref{prop:limit}, the blockchain clearing system can result in an infinite number of blocks being required to clear the financial system.  This infinite number of block notion, in fact, lends itself to a simple clearing fixed point akin to a prioritized Eisenberg-Noe system as presented in, e.g.,~\cite{E07} but with a cost structure for the seniority levels of debt when the block capacity $C^\#$ is large enough.  Let $K_i^*$ denote the \emph{terminal} net worth of bank $i$ after the (possibly infinite length) clearing procedure as presented above; this interpretation is made explicit in Proposition~\ref{prop:limit-K} below.  Notably, the results presented within this section do not rely on~\eqref{np} no payment rehypothecation as the limit in (block) time allows for all payments to eventually be reused as necessary.

These terminal net worths $\bK^*=(K^*_1,\dots, K^*_n)^\T$ are constructed so that bank $j$ covers all of its liabilities requested at a fee level above $f_j^*$ (and proportionally at $f_j^*$); this level $f_j^*$ is the minimal fee that bank $j$ is able to satisfy at least partially with its accumulated assets.  The construction of a single threshold fee level for each bank can be taken because $C^\#$ is sufficiently large which guarantees the miners will always select the highest fees without consideration for how many counterparties are involved. With this notion, the payments made by bank $j$ depend on whether $j$ is solvent ($K_j^* \geq 0$) or defaulting ($K_j^* < 0$).  
If bank $j$ is solvent then it must cover all of its obligations in full and bank $i$ receives $\int_0^1 (1-f)b_{ji}(f)df$.  
If bank $j$ defaults on some of its obligations then, as mentioned previously, first it covers only its obligations strictly above $f_j^*$ so that bank $i$ receives \[\int_{(f_j^*)^+}^1 (1-f)b_{ji}(f)df = \lim\limits_{\epsilon\searrow0} \int_{(f_j^*)+\epsilon}^1 (1-f)b_{ji}(f)df.\] 
However, if there is a positive measure of mass of obligations at $f_j^*$ (i.e., $\int_{f_j^*}^{(f_j^*)^+} b_{ji}(f)df > 0$) then $j$ may have a cash surplus with which it covers part of this mass of obligations.  This cash surplus is exactly equal to the difference between the assets held by $j$ (i.e., $K_j^* + \sum_{k \in [n]} L_{jk}$) and the obligations with fee levels above $f_j^*$ (i.e., $\sum_{k \in [n]} [\mu L_{jk} + \int_{(f_j^*)^+}^1 b_{jk}(f)df]$).   The proportional repayment scheme described in Algorithm~\ref{alg:bids} implies this cash surplus is distributed following the rates $\pi_{ji}^*$ defined in~\eqref{eq:pi} below with an explicit cut provided for the miners. 
That is, $\bK^*$ are these terminal net worths if for every bank $i \in [n]$ they satisfy the fixed point equations
\begin{align}
\label{eq:K} K_i^* &= x_i - \sum_{j \in [n]} L_{ij}+ \sum_{j \in [n]} \left[\mu(1-f_R)L_{ji} + \ind{K_j^* \geq 0} \int_0^1 (1-f)b_{ji}(f)df\right.\\
\nonumber    &\quad \left. + \ind{K_j^* < 0}\left(\int_{(f_j^*)^+}^1 (1-f)b_{ji}(f)df + \pi_{ji}^*\left(K_j^* + \sum_{k \in [n]} \left[(1-\mu)L_{jk} - \int_{(f_j^*)^+}^1 b_{jk}(f)df\right]\right)\right)\right], 
\end{align}
where
\begin{align}
\label{eq:pi} \pi_{ij}^* &= \begin{cases} (1-f_i^*)\frac{b_{ij}(f_i^*)}{\sum_{k \in [n]} b_{ik}(f_i^*)} &\text{if } \sum_{k \in [n]} b_{ik}(f_i^*) > 0 \\ 0 &\text{if } \sum_{k \in [n]} b_{ik}(f_i^*) = 0, \end{cases} \quad \text{for all} \quad j \in [n],\\
\label{eq:f} f_i^* &= \inf\Bigl\{f_i \in [0,1] \; | \; K_i^* + \sum_{j \in [n]} L_{ij} \geq \sum_{j \in [n]} \Bigl[\mu L_{ij} + \int_{f_i^+}^1 b_{ij}(f)df\Bigr]\Bigr\}.
\end{align}
The key element of this clearing net worths problem are the levels $\bdf^*=(f^*_1, \dots, f^*_n)^\T \in [0,1]^n$ denoting the minimal fee (i.e., the lowest seniority) that each bank is able to satisfy in full.  Bank $i$ may be able to satisfy a fraction of its obligations at the seniority $f_i^*$ (at which level we assume a pro-rata repayment scheme as in Algorithm~\ref{alg:bids} so as to coincide with the Eisenberg-Noe framework), but no payments are made on any bids $b_{ij}(f)$ for $f < f_i^*$.
Note that, for any feasible wealth level $K_i$ of bank $i$, we have 
$$K_i + \sum_{j \in [n]} \int_0^1 b_{ij}(f)df = K_i + (1-\mu)\sum_{j \in [n]} L_{ij} \geq x_i + \mu \sum_{j \in [n]} \left[(1-f_R) L_{ji} - L_{ij}\right] \geq 0$$ by Assumption~\ref{assump:initial}; as such, the optimization problem defining the defaulting fee level $f_i^*$ of bank $i$ is always feasible at $f_i = 1$.

\begin{lemma}\label{lemma:K}
Consider the blockchain system with sufficiently large capacity, i.e., \\
$C^\# \geq \abs{\{(i,j)\in [n]^2 \; | \; L_{ij} > 0\}}.$  There exists a greatest and least clearing solution $\bK^\uparrow \geq \bK^\downarrow$ to~\eqref{eq:K}.  Moreover, if $x_i + \mu(1-f_R)\sum_{j \in [n]} L_{ji} \geq 0$ for all $i \in [n]$, then there exists a unique clearing solution $\bK^\uparrow = \bK^\downarrow$. 
\end{lemma}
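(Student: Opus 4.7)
My approach is to cast~\eqref{eq:K}--\eqref{eq:f} as a fixed-point problem for an order-preserving operator on a complete lattice and invoke Tarski's fixed point theorem, following the template of Proposition~\ref{prop:centralized}. Define the operator $\Psi : [\underline{\bK}, \bar{\bK}] \to \bbr^n$ by the following sequence: given $\bK$, compute $f_i^*(\bK)$ for each $i$ via~\eqref{eq:f}, then $\pi_{ji}^*(\bK)$ via~\eqref{eq:pi}, and finally set $\Psi_i(\bK)$ equal to the right-hand side of~\eqref{eq:K}. Natural bounds are $\underline{K}_i := x_i - \sum_j L_{ij} + \mu(1-f_R)\sum_j L_{ji}$ (only the posted collateral is recovered, with all bid payments failing) and $\bar{K}_i := \underline{K}_i + \sum_j \int_0^1 (1-f)b_{ji}(f)df$ (every bid is paid in full with the lowest possible fees). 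Because $C^\# \geq |\{(i,j) \in [n]^2 \; | \; L_{ij} > 0\}|$, the block-capacity constraint from Algorithm~\ref{alg:bids} never binds, which is what justifies the use of a single threshold $f_i^*$ per bank in~\eqref{eq:f}. By the feasibility remark immediately preceding the lemma (which rests on Assumption~\ref{assump:initial}), the infimum in~\eqref{eq:f} is attained with $f_i^* \leq 1$, so $\Psi$ is well-defined, and by construction $\Psi$ maps $[\underline{\bK}, \bar{\bK}]$ into itself.

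The critical step is showing $\Psi$ is order-preserving. Fix $\bK \leq \bK'$ in $[\underline{\bK}, \bar{\bK}]$. Directly from~\eqref{eq:f}, increasing the wealth of bank $j$ relaxes the feasibility constraint defining $f_j^*$, so $f_j^*(\bK) \geq f_j^*(\bK')$. Next, for fixed $j$, the $j$th summand on the right-hand side of~\eqref{eq:K} is non-increasing in the scalar $f_j^*$: lowering $f_j^*$ enlarges the integral $\int_{(f_j^*)^+}^1 (1-f)b_{ji}(f)df$, while any atomic mass of $b_{ji}$ located at the old threshold is continuously absorbed through the pro-rata correction term weighted by $\pi_{ji}^*$. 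The only delicate point is the switch between the $\{K_j^* \geq 0\}$ and $\{K_j^* < 0\}$ branches; at the transition $K_j^* = 0$, the feasibility condition forces $f_j^*(\bK) = 0$, so that the default-branch expression $\int_{0^+}^1 (1-f)b_{ji}(f)df$ plus the pro-rata term collapses onto the solvent-branch expression $\int_0^1 (1-f)b_{ji}(f)df$, which makes $\Psi$ continuous (and monotone) across the threshold. Assembling these pieces, $\Psi$ is an order-preserving self-map of the complete lattice $[\underline{\bK}, \bar{\bK}]$, and Tarski's fixed point theorem yields the greatest and least fixed points $\bK^\uparrow \geq \bK^\downarrow$.

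For uniqueness under $x_i + \mu(1-f_R)\sum_j L_{ji} \geq 0$, I plan to adapt the conservation/comparability argument standard in Eisenberg-Noe-type systems. If $\bK^\uparrow$ and $\bK^\downarrow$ are two fixed points with $\bK^\uparrow \geq \bK^\downarrow$, then on the set $S := \{i : K_i^\uparrow > K_i^\downarrow\}$ some bank $j \in S$ must have $f_j^*(\bK^\downarrow) > f_j^*(\bK^\uparrow)$, meaning $j$ pays strictly more under $\bK^\uparrow$ than under $\bK^\downarrow$. Summing~\eqref{eq:K} over $S$, the interbank transfers internal to $S$ cancel and one is left with an identity relating the differences in inflows from $[n]\setminus S$ and outflows to $[n]\setminus S$; the hypothesis $x_i + \mu(1-f_R)\sum_j L_{ji} \geq 0$ ensures that every bank in $S$ has a nonnegative ``hard'' asset position so that the strictly larger fees paid to miners under $\bK^\uparrow$ cannot be compensated, producing a contradiction unless $S = \emptyset$. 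The main obstacle I anticipate is the careful bookkeeping at the solvency/default boundary $\{K_j = 0\}$ and at atoms of the bidding measures $b_{ji}$, where the pro-rata correction must be handled precisely to preserve both the continuity of $\Psi$ and the conservation identity underlying the uniqueness step.
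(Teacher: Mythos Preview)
Your existence argument via Tarski is essentially the paper's, with more detail on monotonicity than the paper provides; that part is fine.

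Your uniqueness sketch, however, diverges from the paper and has a real gap. The paper does \emph{not} localize to the set $S = \{i : K_i^\uparrow > K_i^\downarrow\}$. Instead it sums $(K_i^*)^+$ over \emph{all} banks and shows
\[
\sum_{i}(K_i^*)^+ \;=\; \text{(constant)} - \sum_i G_i(K_i^*),
\]
where $G_i(K_i^*)$ is the total fee paid to miners out of bank $i$'s payments. Monotonicity of $f_i^*$ gives $G_i(K_i^\uparrow) \geq G_i(K_i^\downarrow)$, which forces $(K_i^\uparrow)^+ = (K_i^\downarrow)^+$ for every $i$. From this one deduces that any bank in $S$ satisfies $0 \geq K_i^\uparrow > K_i^\downarrow$ and, moreover, $f_i^\uparrow = f_i^\downarrow = 0$. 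The problem is then reduced to a standard Eisenberg--Noe system on the zero-fee atoms, and the hypothesis $x_i + \mu(1-f_R)\sum_j L_{ji} \geq 0$ is invoked precisely to make that reduced system \emph{regular} in the sense of~\cite[Definition~5]{EN01}, so that \cite[Theorem~2]{EN01} applies.

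Your plan misidentifies the role of the hypothesis (it is a regularity condition for the reduced EN network, not a ``hard asset'' positivity that blocks fee leakage on $S$), and the claim that ``interbank transfers internal to $S$ cancel'' is not correct here: payments carry proportional fees to miners, so summing over $S$ does not produce a clean telescoping identity. Without the global positive-equity conservation step and the subsequent reduction to a regular EN system, your contradiction argument does not close. I would recommend following the paper's two-stage route: first establish $(K_i^\uparrow)^+ = (K_i^\downarrow)^+$ via the fee-accounting identity, then reduce to classical EN uniqueness.
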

\begin{proof}
Before proving the existence and uniqueness of the clearing solution, we wish to note that the key components $\bK^*$ and $\bold{f}^*$ defined in~\eqref{eq:K} and~\eqref{eq:f} can be simplified in equilibrium as:
\begin{align*}
K_i^* &= x_i - \sum_{j \in [n]} L_{ij}+ \sum_{j \in [n]}\left[\mu(1-f_R)L_{ji} + \ind{K_j^* \geq 0} \int_0^1 (1-f)b_{ji}(f)df\right.\\ 
    &\quad \left. + \ind{K_j^* < 0}\left(\int_{(f_j^*)^+}^1 (1-f)b_{ji}(f)df + \pi_{ji}^*\left(K_j^* + \sum_{k \in [n]} \int_0^{f_j^*} b_{jk}(f)df\right)\right)\right] ,\\
f_i^* &= \inf\left\{f_i \in [0,1] \; | \; K_i^* + \sum_{j \in [n]} \int_0^{f_i} b_{ij}(f)df \geq 0\right\}.
\end{align*}

The existence of a greatest and least clearing net worth $\bK^\uparrow \geq \bK^\downarrow$ follows from an application of Tarski's fixed point theorem as $f_i^*$ is nonincreasing in $K_i^*$ and the resulting monotonicity of the clearing equation.

We will prove uniqueness by following the same general strategy as that used in~\cite{EN01}.  Specifically, we will first prove that the positive equities are equivalent under any clearing solution; we will then use that result to reduce the problem to an equivalent Eisenberg-Noe system under which uniqueness results already exist.
To consider the positive equities, first note that $(K_i^\uparrow)^+ \geq (K_i^\downarrow)^+$ for any bank $i$.  Now let $\bK^*$ denote an arbitrary clearing solution, then the positive equity for bank $i$ is provided by
\begin{align*}
(K_i^*)^+ &= x_i + \sum_{j \in [n]} \left[\mu(1-f_R)L_{ji} + \ind{K_j^* \geq 0}\int_0^1 (1-f)b_{ji}(f)df\right.\\ 
    &\quad \left. + \ind{K_j^* < 0}\left(\int_{(f_j^*)^+}^1 (1-f)b_{ji}(f)df + \pi_{ji}^*\left[K_j^{*} + \sum_{k \in [n]} \int_0^{f_j^*} b_{jk}(f)df\right]\right)\right] \\
    &\quad - \left[\sum_{j \in [n]} L_{ij} - (K_i^{*})^-\right]\\
&= x_i + \sum_{j \in [n]} \left[\mu(1-f_R)L_{ji} + \ind{K_j^* \geq 0}\int_0^1 (1-f)b_{ji}(f)df \right.\\ 
    &\quad \left. + \ind{K_j^* < 0}\left(\int_{(f_j^*)^+}^1 (1-f)b_{ji}(f)df + \pi_{ji}^*\left[K_j^{*} + \sum_{k \in [n]} \int_0^{f_j^*} b_{jk}(f)df\right]\right)\right]\\ 
    &\quad - \left[\sum_{j \in [n]} \mu L_{ij} + \ind{K_i^* \geq 0} \sum_{j \in [n]} \int_0^1 b_{ij}(f)df\right]\\
        &\quad -\ind{K_i^* < 0} \left[ \sum_{j \in [n]} \int_{(f_i^*)^+}^1 b_{ij}(f)df + K_i^* + \sum_{j \in [n]} \int_0^{f_i^*} b_{ij}(f)df\right].
\end{align*}
As a direct consequence, the total amount of positive equity in the system is given by:
\begin{align*}
&\sum_{i \in [n]} (K_i^*)^+ = \sum_{i \in [n]} x_i + \mu(1-f_R)\sum_{i \in [n]}\sum_{j \in [n]} L_{ij} + \sum_{i \in [n]}\sum_{j \in [n]}\ind{K_i^* \geq 0}\int_0^1 (1-f)b_{ij}(f)df\\ 
    &\quad + \sum_{i \in [n]}\sum_{j \in [n]}\ind{K_i^* < 0}\int_{(f_i^*)^+}^1 (1-f)b_{ij}(f)df + \sum_{i \in [n]}\ind{K_i^* < 0}(1-f_i^*)\left[K_i^* + \sum_{j \in [n]}\int_0^{f_i^*}b_{ij}(f)df\right]\\ 
    &\quad - \mu\sum_{i \in [n]}\sum_{j \in [n]} L_{ij} - \sum_{i \in [n]}\sum_{j \in [n]}\ind{K_i^* \geq 0}\int_0^1 b_{ij}(f)df - \sum_{i \in [n]}\sum_{j \in [n]}\ind{K_i^* < 0} \int_{(f_i^*)^+}^1 b_{ij}(f)df\\ 
    &\quad - \sum_{i \in [n]} \ind{K_i^* < 0}\left[K_i^* + \sum_{j \in [n]} \int_0^{f_i^*} b_{ij}(f)df\right]\\
&= \sum_{i \in [n]} x_i - \mu f_R \sum_{i \in [n]}\sum_{j \in [n]} L_{ij} - \sum_{i \in [n]}\ind{K_i^* \geq 0}\sum_{j \in [n]}\int_0^1 fb_{ij}(f)df\\ 
    &\quad - \sum_{i \in [n]}\ind{K_i^* < 0}\sum_{j \in [n]}\int_{(f_j^*)^+}^1 fb_{ij}(f)df - \sum_{i \in [n]}\ind{K_i^* < 0}f_i^*\left[K_i^* + \sum_{j \in [n]}\int_0^{f_i^*}b_{ij}(f)df\right]\\
&= \sum_{i \in [n]} x_i - \mu f_R \sum_{i \in [n]}\sum_{j \in [n]} L_{ij} - \sum_{i \in [n]}\underbrace{\left[\sum_{j \in [n]} \int_{(f_i^*)^+}^1 f b_{ij}(f)df + f_i^*\left(K_i^* + \sum_{j \in [n]} \int_0^{f_i^*} b_{ij}(f)df\right)\right]}_{=: G_i(K_i^*)}.
\end{align*}
The final equation above follows since $K_i^* \geq 0$ implies $f_i^* = 0$ by construction of the minimal fee level $f_i^*$.  
As $\sum_{i \in [n]} x_i - \mu f_R \sum_{i \in [n]}\sum_{j \in [n]} L_{ij}$ is a constant, $\sum_{i \in [n]} (K_i^\uparrow)^+ > \sum_{i \in [n]} (K_i^\downarrow)^+$ if and only if $\sum_{i \in [n]} G_i(K_i^\uparrow) < \sum_{i \in [n]} G_i(K_i^\downarrow)$.  However, by monotonicity of the minimal fee level $f_i^\uparrow \leq f_i^\downarrow$, it trivially holds that $G_i(K_i^\uparrow) \geq G_i(K_i^\downarrow)$ and thus $\sum_{i \in [n]} (K_i^\uparrow)^+ \leq \sum_{i \in [n]} (K_i^\downarrow)^+$.  
As a direct consequence, it must hold that $(K_i^\uparrow)^+ = (K_i^\downarrow)^+$ for every bank $i$.

In fact, by observation, $G_i(K_i^\uparrow) = G_i(K_i^\downarrow)$ only if $f_i^\uparrow = f_i^\downarrow = 0$ or $K_i^\uparrow = K_i^\downarrow$. Assume now that $K_i^\uparrow > K_i^\downarrow$ for some bank $i$.  By the equality of the positive equities, it must therefore hold that $0 \geq K_i^\uparrow > K_i^\downarrow$.  As the equality of the system-wide positive equity implies that $\sum_{j \in [n]} G_j(K_j^\uparrow) = \sum_{j \in [n]} G_j(K_j^\downarrow)$, it must hold that $f_i^\uparrow = f_i^\downarrow = 0$ for any bank $i$ such that $K_i^\uparrow > K_i^\downarrow$.  Finally, as all other banks have unique clearing net worths, any clearing net worths $\bK^*$ must be the result of an Eisenberg-Noe clearing system defined by:
\begin{align*}
&\tilde x_i := x_i \\
&+ \sum_{j \in [n]}\left[\mu(1-f_R)L_{ji} + \int_{(f_j^\uparrow)^+}^1 (1-f)b_{ji}(f)df + \ind{K_j^\uparrow < 0}\ind{f_j^\uparrow > 0}\pi_{ji}^\uparrow\left(K_j^\uparrow + \sum_{k \in [n]}\int_0^{f_j^\uparrow} b_{jk}(f)df\right)\right],\\
&\tilde L_{ij} := \ind{f_i^\uparrow = 0}\int_0^{0^+}b_{ij}(f)df.
\end{align*}
By assumption this is a regular financial network\footnote{A financial system is regular if every risk orbit $o(i) = \{j \in [n] \; | \; \text{there exists a directed path from $i$ to $j$}\}$ is a surplus set ($\tilde L_{jk} = 0$ for every $(j,k) \in o(i) \times o(i)^c$ and $\sum_{j \in o(i)} \tilde x_j > 0$); see \cite[Definition 5]{EN01}.} and thus the greatest and least clearing vectors are the same and uniqueness holds by~\cite[Theorem 2]{EN01}.
\end{proof}

\begin{remark}
The clearing procedure with $C^\# \geq \abs{\{(i,j)\in[n]^2 \; | \; L_{ij} > 0\}}$ can readily be compared with the prioritized Eisenberg-Noe system over a (potential) \emph{continuum} of seniority levels.  The distinction between these models lies fully in the subtraction of the fees from interbank assets within the construction within the decentralized system.  These fees can therefore be thought of as the explicit cost of decentralization and provide the dead-weight losses caused by the miners.  (Though a centralized system may have fees or other onerous regulations, we do not model those herein and are only considering the same parameters applied to the centralized and decentralized clearing systems.)
\end{remark}

We now wish to formally relate the terminal net worths $\bK^*$ to the blockchain clearing system as presented in Algorithm~\ref{alg:bids}.  Specifically, the limiting cash account $\lim_{t\to\infty} V_i^t$ is the positive equity $(K_i^*)^+$ of bank $i$.
\begin{proposition}\label{prop:limit-K}
Consider the blockchain system with sufficiently large capacity, i.e., \\ $C^\# \geq \abs{\{(i,j)\in[n]^2 \; | \; L_{ij} > 0\}}$. Under the uniqueness condition of Lemma~\ref{lemma:K}, $(K_i^*)^+ = \lim_{t \to \infty} V_i^t$ for all $i\in [n]$, where $(\bV^t)_{t \in \bbn}$ is constructed as in Algorithm~\ref{alg:bids}.
\end{proposition}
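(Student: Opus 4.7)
The plan is to use the already-constructed limits from Proposition~\ref{prop:limit} to build a candidate solution to the fixed-point system~\eqref{eq:K}-\eqref{eq:f} of Lemma~\ref{lemma:K}, and then invoke the uniqueness half of Lemma~\ref{lemma:K} to identify it with $\bK^*$. Concretely, write $V_i^\infty := \lim_{t\to\infty} V_i^t$ and, by the monotone convergence argument already used for Proposition~\ref{prop:limit}, let $\bar b_{ij}^* := \lim_{t\to\infty} \bar b_{ij}^t$ pointwise. Set $\tilde K_i := V_i^\infty - \sum_{j\in[n]} \int_0^1 \bar b_{ij}^*(f)df$, which is the natural accounting identity (cash on hand minus residual unpaid obligations); and set $\tilde f_i := \sup\{f \in [0,1] \; | \; \sum_{j\in[n]} \bar b_{ij}^*(f) > 0\}$ (interpreted as $0$ if the set is empty). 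The target is to show that $(\tilde \bK, \tilde \bdf)$ solves~\eqref{eq:K}-\eqref{eq:f}, so the uniqueness clause of Lemma~\ref{lemma:K} yields $\tilde \bK = \bK^*$, and then a dichotomy argument gives $(K_i^*)^+ = V_i^\infty$.

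The critical step is the structural dichotomy: for each $i \in [n]$, either $\sum_{j} \int \bar b_{ij}^*(f)df = 0$ (bank $i$ clears all its bids and $\tilde K_i = V_i^\infty \geq 0$) or $V_i^\infty = 0$ (bank $i$'s cash account is fully depleted and $\tilde K_i = -\sum_{j}\int \bar b_{ij}^*df \leq 0$). Suppose for contradiction that both $V_i^\infty > 0$ and $\sum_{j}\int\bar b_{ij}^* df > 0$. Since $C^\# \geq |\{(i,j) : L_{ij} > 0\}|$, the capacity constraint in $\bcal_t$ is never binding and the greedy step~\eqref{stepMax} will select every pair with remaining positive bids that a bank can afford. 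For all sufficiently large $t$, one has $V_i^{t-1} \geq V_i^\infty/2$ and $\sum_j \int \bar b_{ij}^t df \geq \sum_j \int \bar b_{ij}^* df$; hence the outgoing flow of bank $i$ in block $t$ is at least $\min(V_i^\infty/2, \sum_j \int \bar b_{ij}^*df)>0$, which is a constant independent of $t$. Summed over $t$, this contradicts the finiteness of the total bids $\sum_j \int b_{ij}df = (1-\mu)\sum_j L_{ij}$. This dichotomy immediately establishes $(\tilde K_i)^+ = V_i^\infty$ for every $i$.

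Next I verify that $(\tilde\bK,\tilde\bdf)$ is a fixed point. The threshold identity~\eqref{eq:f} follows from the defining property of $\bar f_i^t$ in Algorithm~\ref{alg:bids}, namely that bids strictly above $\bar f_i^t$ have all been cleared while bids at or below have not yet been fully cleared; passing to the limit and applying the dichotomy of the previous paragraph shows $\tilde f_i$ is exactly the minimal fee consistent with~\eqref{eq:f}. For~\eqref{eq:K}, I substitute the telescoped cash-flow identity $V_i^\infty = V_i^0 + \sum_{j}\int [(1-f)(b_{ji}-\bar b_{ji}^*) - (b_{ij}-\bar b_{ij}^*)]df$ into the definition of $\tilde K_i$; the elementary simplification (using $\sum_j L_{ij} = \sum_j \mu L_{ij} + \sum_j \int b_{ij}df$) reduces it to the form appearing in~\eqref{eq:K}, with the $\ind{K_j^* \geq 0}$ and $\ind{K_j^* <0}$ cases handled by the dichotomy, and the pro-rata term $\pi_{ji}^*$ at $\tilde f_j$ matching the limiting form of the pro-rata weights~\eqref{eq:pi} built into Step~4(b) of the algorithm.

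Finally, with $(\tilde\bK,\tilde\bdf)$ shown to solve~\eqref{eq:K}-\eqref{eq:f}, the uniqueness clause of Lemma~\ref{lemma:K} (applicable by hypothesis) gives $\tilde\bK = \bK^*$, and combining with the dichotomy yields $(K_i^*)^+ = V_i^\infty = \lim_{t\to\infty}V_i^t$ for every $i\in[n]$. The principal obstacle is the dichotomy in the second paragraph: one must rule out a steady-state in which a bank perpetually cycles significant cash through incoming and outgoing flows without ever clearing its residual bids. The capacity condition $C^\# \geq |\{(i,j) : L_{ij}>0\}|$ is essential here because it prevents the block-capacity constraint from blocking further clearing and lets the greedy mechanism act on all solvent banks simultaneously; the remainder of the argument is bookkeeping that must carefully handle the boundary contribution of any atom of $b_{ij}$ at $\tilde f_i$ via the $\int_{(f^*)^+}^1$ convention introduced in Section~\ref{sec:blockchain}.
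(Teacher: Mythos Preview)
Your approach is essentially the same as the paper's: build a candidate net worth from the algorithmic limits, verify it satisfies the fixed-point system~\eqref{eq:K}--\eqref{eq:f}, and invoke the uniqueness clause of Lemma~\ref{lemma:K}. The organization differs slightly. The paper works with the monotone limit $\bar f_i^* := \lim_{t\to\infty}\bar f_i^t$ of the algorithm's threshold fees (rather than your $\tilde f_i$ defined via the support of $\bar b^*_{ij}$), writes down directly the fixed point satisfied by $\bV^*$ with cases $\ind{V_j^*>0}$ versus $\ind{V_j^*=0}$, and then defines $K_i^*$ piecewise from $V_i^*$ and the surplus $S_i(\bar{\bdf}^*)$. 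Its final step is a case analysis ($K_i^*>0$ versus $K_i^*\le 0$) showing $\bar f_i^* = f_i^*$; this is exactly the analogue of your dichotomy, which you isolate as an explicit preliminary lemma. Your accounting identity $\tilde K_i = V_i^\infty - \sum_j\int \bar b^*_{ij}\,df$ coincides with the paper's piecewise definition once the dichotomy is in hand, so the two constructions agree.

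One caution on your dichotomy argument: the claim that bank $i$'s outgoing flow in block $t$ equals $\min\bigl(V_i^{t-1},\,\sum_j\int\bar b^t_{ij}\,df\bigr)$ is not quite what Algorithm~\ref{alg:bids} delivers, because the pro-rata weights in Step~4(b) carry a $(1-\bar f_i^{t,J})$ factor; you should instead argue that the outgoing flow is bounded below by a fixed positive quantity depending on the eventual threshold (or, as the paper does, sidestep this by using monotonicity of $\bar f_i^t$ directly). This is a local repair, not a structural problem.
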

\begin{proof}
Within Algorithm~\ref{alg:bids} with $C^\# \geq \abs{\{(i,j)\in[n]^2 \; | \; L_{ij} > 0\}}$, the loop over $J$ need not be considered as $J = I$ can be taken without loss of generality.  As such, for $i\in[n]$, we can consider explicitly $\bar f_i^t$ without the need for the superscript $J$.  Note that $\bar f_i^t \in [0,1]$ is nonincreasing over blocks $t$ for every bank $i$.  Therefore, there exists some limit $\bar f_i^* = \lim_{t \to \infty} \bar f_i^t$. Furthermore, by Proposition~\ref{prop:limit}, $\bV^* = \lim_{t \to \infty} \bV^t$ exists as well.  In fact, by construction within Algorithm~\ref{alg:bids}, $\bV^*$ solves the following fixed point problem: 
\begin{align}
\nonumber V_i^* &= \left(\begin{array}{l}x_i + \sum_{j \in [n]}\left[\mu(1-f_R)L_{ji} + \ind{V_j^* > 0} \int_0^1 (1-f)b_{ji}(f)df\right.\\ \quad\left.+ \ind{V_j^* = 0}\left(\int_{(\bar f_j^*)^+}^1 (1-f)b_{ji}(f)df + \bar\pi_{ji}^* S_j(\bar {\bdf}^*)\right)\right] - \sum_{j \in [n]} L_{ij}\end{array}\right)^+, \quad \forall i \in [n],
\end{align}
where, for every $i,j \in [n]$,
\begin{align}
\label{eq:S} S_i(\bar {\bdf}^*) &= x_i + \sum_{j \in [n]}\left[\mu(1-f_R)L_{ji} + \ind{V_j^* > 0} \int_0^1(1-f)b_{ji}(f)df\right.\\ 
\nonumber    &\quad \left.+ \ind{V_j^* = 0}\left(\int_{(\bar f_j^*)^+}^1(1-f)b_{ji}(f)df + \bar\pi_{ji}^* S_j(\bar {\bdf}^*)\right)\right] - \sum_{j \in [n]} \left[\mu L_{ij} + \int_{(\bar f_i^*)^+}^1 b_{ij}(f)df\right],\\
\nonumber \bar\pi_{ij} &= \begin{cases} (1- \bar f_i^*)\frac{b_{ij}(\bar f_i^*)}{\sum_{k \in [n]} b_{ik}(\bar f_i^*)} &\text{if } \sum_{k \in [n]} b_{ik}(\bar f_i^*) > 0, \\ 0 &\text{if } \sum_{k \in [n]} b_{ik}(\bar f_i^*) = 0, \end{cases} 
\end{align}
and $\bar {\bdf}^*=(\bar f_1^*, \dots, \bar f_n^*)^\T$.  In this fixed point problem $S_i(\bar {\bdf}^*)$ denotes the surplus assets after covering all bids strictly above the threshold fee $\bar f_i^*$.  
Note that $S_j(\bar {\bdf}^*)$ only appears if $V_j^* = 0$ and this surplus is bounded from below by $0$ and from above by $\sum_{j \in [n]} \int_{\bar f_i^*}^{(\bar f_i^*)^+} b_{ij}(f)df$.
Now, define for every bank $i$
$$K_i^* := V_i^* \ind{V_i^* > 0} + \Big(S_i(\bar {\bdf}^*) - \sum_{j \in [n]} \int_0^{\bar f_i^*} b_{ij}(f)df\Big)\ind{V_i^* = 0},$$ so that $\bK^*$ solves~\eqref{eq:K} with threshold fees $\bar {\bdf}^*$ by construction.  Therefore, by uniqueness of $\bK^*$ via Lemma~\ref{lemma:K}, the result holds if $$\bar f_i^* = \inf\bigl\{f_i \in [0,1] \; | \; K_i^* + \sum_{j \in [n]} \int_0^{f_i} b_{ij}(f)df \geq 0\bigr\} =: f_i^*$$ for every bank $i$.  
First, consider the case in which $K_i^* > 0$ then $K_i^* = V_i^*$ and $f_i^* = 0$. If $\bar f_i^* > f_i^*$ then Algorithm~\ref{alg:bids} can continue as there are more obligations to be fulfilled that can be covered; this contradicts the fact that $\bar f_i^* > f_i^*$, as a result, $\bar f_i^* = 0$ as well.
Second, consider the case in which $K_i^* \leq 0$ then $K_i^* = S_i(\bar {\bdf}^*) - \sum_{j \in [n]} \int_0^{\bar f_i^*} b_{ij}(f)df$ and $f_i^* = \inf\{f_i \in [0,1] \; | \; S_i(f_i) \geq 0\}$; if $f_i^* < \bar f_i^*$ then $S_i(\bar {\bdf}^*) > \sum_{j \in [n]} \int_{\bar f_i^*}^{(\bar f_i^*)^+} b_{ij}(f)df$ and if $f_i^* > \bar f_i^*$ then $S_i(\bar {\bdf}^*) < 0$, either of which contradict its bound as the surplus assets.
\end{proof}

We wish to conclude this section by noting that the equilibrium condition for the terminal net worths $\bK^*$ as provided in~\eqref{eq:K} simplifies significantly if all bids $b_{ij}$, for every pair of banks $i,j$, are continuous so that the surplus (considered explicitly in~\eqref{eq:S} in the above proof) for defaulting firms is always equal to $0$.  Specifically, under this setting the limiting net worths can be defined as:
\begin{align*}
K_i^* &= x_i + \sum_{j \in [n]} \left[\mu(1-f_R)L_{ji} + \int_{f_j^*}^1 (1-f)b_{ji}(f)df\right] - \sum_{j \in [n]} L_{ij},\\
f_i^* &= \inf\left\{f_i \in [0,1] \; | \; K_i^* + \sum_{j \in [n]} \int_0^{f_i} b_{ij}(f)df \geq 0\right\}.
\end{align*}
Notably, the same uniqueness proof given in Lemma~\ref{lemma:K} can be used to prove the uniqueness in this setting without any assumption on the assets of each bank.

\section{Optimal Bidding}\label{sec:bidding}


In the above construction of decentralized clearing, the bidding strategies $b_{ij}: [0,1] \to \bbr_+ \cup \{\infty\}$ for any pair of banks $i,j$ have been assumed to be known.  We now want to bring our attention to optimal constructions for such bidding strategies.  As will be seen, the optimal bidding strategy for one obligation may depend on the bids submitted for all other obligations even for distinct counterparties.  We consider two distinct approaches to forming the optimal bidding strategies: \begin{enumerate*} \item as the result of a complex non-cooperative game; and \item as a Pareto optimal strategy.\end{enumerate*} 

Herein, due to the natural discretization that occurs from, e.g., the limited divisibility of assets in practice, we limit ourselves to \emph{discrete} bidding strategies.  Specifically, bids on the (unsecured portion of the) obligation $(1-\mu)L_{ij}$ can only be made in amounts 
$$\dcal_{ij}(D) := [0,(1-\mu)L_{ij}]\cap\left(\frac{1}{D}\bbn\right),$$ 
(e.g., in dollars and cents or bitcoins and satoshis) at fee levels 
$$\fcal(F) := [0,1]\cap\left(\frac{1}{F}\bbn\right),$$ 
for discretization parameters $D,F\in\N$.\footnote{Other finite discretizations of the bids and fees can be taken without altering the following results.  For instance, bids can be placed in percentage terms instead, i.e., $\dcal_{ij}(D) := (1-\mu)L_{ij} \times \left([0,1]\cap\left(\frac{1}{D}\bbn\right)\right)$.}
\begin{assumption}\label{ass:discrete}
For the remainder of this work, we will assume that the discretization of bids $D\in\N$ satisfies $D(1-\mu)L_{ij} \in\bbn$ for all (unsecured) obligation $(1-\mu)L_{ij}$ as any partial obligation would never be able to be requested in a bid.
\end{assumption}
Therefore, a feasible bidding strategy $b_{ij}: [0,1] \to \bbr_+ \cup \{\infty\}$ for (unsecured) obligation $(1-\mu)L_{ij}$ must be of the form $$b_{ij}(\cdot) := \sum_{f \in \fcal(F)} d_{ij}(f) \delta_f(\cdot)$$ for Dirac delta function $\delta$ and set of bid amounts $d_{ij}: \fcal(F) \to \dcal_{ij}(D)$ such that $$\sum_{f \in \fcal(F)} d_{ij}(f) = (1-\mu)L_{ij}.$$
We will denote the set of all feasible bidding strategies $b_{ij}$ on the (unsecured) obligation $(1-\mu)L_{ij}$ with discretizations $D,F$ by $\bcal_{ij}(D,F)$.

\subsection{Nash Equilibrium Bidding}\label{sec:bidding-nash}
Within this section we wish to consider the setting in which the banks choose their bidding strategy as the result of a complex non-cooperative game.  In such a way, the bids placed for this blockchain clearing system must be a Nash equilibrium.  

Formally, each bank $j$ must submit, within the smart contracts encoding its incoming (unsecured) obligations $(1-\mu)L_{ij}$ from all banks $i$, the bidding strategy $b_{ij} \in \bcal_{ij}(D,F)$. Such bidding strategies can be either pure or mixed, i.e., deterministic or random.  Conceptually, though a bank may prefer the certainty of a deterministic bidding strategy so that the costs are known at the implementation of the contract, the act of encoding a known fee schedule within the smart contract (viewable to all participants of the blockchain) subjects the bank to the possibility of predation.  This predation can take the form of another bank simply outbidding the known fees on the same counterparty (who is at risk of default) and, therefore, capturing the payments made before that counterparty defaults.  As such, the optimal bidding strategy for any \emph{risky} obligation may incorporate an element of randomness, i.e., a mixed strategy.  In fact, enforcing deterministic bids can lead to the non-existence of an equilibrium optimal bidding strategy as is detailed in Example~\ref{ex:no-pure} below.  These mixed strategies can be written into a smart contract with, e.g., a random number generator that is called at the execution time of the contract.  Mathematically, we will abuse notation and define a mixed strategy as a random function $b_{ij}: \Omega \to \bcal_{ij}(D,F)$ for all pairs of banks $i,j\in [n]$ and for some probability space $(\Omega,\bbf,\bbp)$. For convenience we will suppress the dependency on $\om$ and simply write $b_{ij}(f) = (b_{ij}(\om))(f).$ We will denote the set of mixed strategies for the obligation $(1-\mu)L_{ij}$ from bank $i$ to $j$ by $L^{0}[\bcal_{ij}(D,F)] := \{b_{ij}: \Omega \to \bcal_{ij}(D,F)\}$.\footnote{Due to the discrete setting considered herein, measurability of $b_{ij} \in L^{0}[\bcal_{ij}(D,F)]$ follows trivially.}  As detailed in Corollary~\ref{cor:nash-exist}, there will always exist some optimal mixed bidding strategy.

In order to construct a bidding strategy -- either pure or mixed -- each bank will act as a utility maximizer within the financial system over its own (terminal) clearing equity as computed through Algorithm~\ref{alg:bids}.  (Recall that the limiting equity exists due to Proposition~\ref{prop:limit}.) That is, the utility gained by bank $i$ given a complete set of bidding strategies $(b_{ij}^\dagger)_{i,j}$ -- either pure or mixed -- is the expected utility
\[\lim_{t\to\infty}\E[u_i(V_i^t(\bb^\dagger))]\]
where the utility function $u_i$ is only applied to the positive equity (i.e., the cash account) as a defaulting institution does not alter its viewpoint based on the scale of the default.  
The expectation is taken over three distinct (and independent notions): \begin{enumerate*} \item the randomness of a mixed bidding strategy; \item a possible random stress scenario as is considered in, e.g.,~\cite{AOT15}; and \item the randomness introduced by taking an arbitrary maximizer in Algorithm~\ref{alg:bids}\eqref{stepMax}.\end{enumerate*}
Note, by Proposition~\ref{prop:limit-K}, in the $C^\# \geq \abs{\{(i,j)\in [n]^2\mid L_{ij} > 0\}}$ setting of Section~\ref{sec:limit}, the expected utility is taken over $[K_i^*(\bb^\dagger)]^+$.
Therefore, bank $i$ seeks to optimize its choice of bids ($b_{ji}$ for obligations $(1-\mu)L_{ji}$) given the bidding strategies taken by all other institutions, i.e.,
\begin{align}
\label{eq:nash}
\bb_{\cdot i}^\dagger &\in \argmax\left\{\lim_{t\to\infty}\E[u_i(V_i^t(\bb_{\cdot i},(\bb_{\cdot k}^\dagger)_{k\neq i}))] \; | \; b_{ji} \in L^{0}[\bcal_{ji}(D,F)] \; \text{ for all} \; j\in [n] \right\}.
\end{align}
If a pure strategy is desired by some bank $i$, then bank $i$ merely updates its strategy space in~\eqref{eq:nash} to be the deterministic strategies $\bcal_{ji}(D,F)$ for every counterparty $j$.

\begin{corollary}\label{cor:nash-exist}
There exists a mixed strategy Nash equilibrium bidding strategy to~\eqref{eq:nash} for any blockchain clearing system with finite discretization $D,F \in \N$.
\end{corollary}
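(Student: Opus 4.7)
The plan is to recognize the optimization problem~\eqref{eq:nash} as a finite $n$-player normal-form game (after lifting to mixed strategies) and then invoke Nash's classical existence theorem. I would proceed in three short steps.

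First, I would verify that each pure strategy space $\bcal_{ji}(D,F)$ is finite. A bidding strategy $b_{ji} = \sum_{f \in \fcal(F)} d_{ji}(f)\delta_f$ in $\bcal_{ji}(D,F)$ is completely determined by the assignment $d_{ji}: \fcal(F)\to\dcal_{ji}(D)$, and by Assumption~\ref{ass:discrete} the total $D(1-\mu)L_{ji}$ is a nonnegative integer. Hence $b_{ji}$ corresponds to a composition of $D(1-\mu)L_{ji}$ into $F+1$ nonnegative integer parts (scaled by $1/D$), of which there are only finitely many. Consequently the pure-strategy space $\prod_{j \in [n]} \bcal_{ji}(D,F)$ of each bank $i$ is finite, and so is the full pure-profile space $\prod_{i,j \in [n]} \bcal_{ji}(D,F)$.

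Second, I would argue that the payoff in~\eqref{eq:nash} is a well-defined bounded real number for every pure profile $\bb$. By Proposition~\ref{prop:limit} the limit $V_i^*(\bb) := \lim_{t\to\infty} V_i^t(\bb)$ exists almost surely, and by the cash-flow constraint in the construction of $\bcal_t$ it satisfies $0 \leq V_i^*(\bb) \leq V_i^0 + \sum_{j\in[n]}\int_0^1 b_{ji}(f)df$, which is finite. The auxiliary randomness entering the payoff -- namely the exogenous stress scenario and the arbitrary tie-breaking in Step~\ref{stepMax} of Algorithm~\ref{alg:bids} -- has a distribution that is independent of the choice of bidding strategies. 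Taking expectation of the uniformly bounded random variable $u_i(V_i^*(\bb))$ therefore yields a finite real payoff for every pure profile, giving a well-defined finite $n$-person strategic game.

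Third, I would lift to mixed strategies. Since $\bcal_{ji}(D,F)$ is finite, a mixed strategy $b_{ji} \in L^{0}[\bcal_{ji}(D,F)]$ is nothing but a probability distribution on this finite set, and bank $i$'s joint mixed strategy set is the compact simplex of probability distributions on its finite pure-profile space $\prod_{j\in[n]}\bcal_{ji}(D,F)$. The mixed-strategy expected payoff is multilinear in the mixing probabilities of all banks and therefore continuous on the compact product of simplices. Nash's classical theorem for finite games then delivers the desired mixed strategy equilibrium, and the result follows. The only delicate point I foresee is confirming measurability and integrability of $u_i(V_i^*(\bb))$ against the auxiliary randomness, but this follows routinely from the deterministic upper bound above together with the almost-sure convergence guaranteed by Proposition~\ref{prop:limit}, so no substantial obstacle arises.
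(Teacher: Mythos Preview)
Your proposal is correct and follows essentially the same approach as the paper: the paper's proof is a one-line appeal to Nash's theorem~\cite{nash1950} justified by the finiteness of the discretized fee and bid levels, and your three steps simply spell out in detail why that appeal is legitimate (finite pure-strategy spaces, bounded well-defined payoffs, standard multilinear extension to mixed strategies). Nothing in your argument departs from the paper's reasoning; you have merely made explicit what the paper leaves implicit.
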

\begin{proof}
This follows directly from Nash's theorem~\cite{nash1950} due to the discretization of the fee and bidding levels.
\end{proof}

\begin{remark}\label{rem:no-competition}
If $x_i + \sum_{j\in[n]} [\mu(1-f_R) + (1-\mu)]L_{ji} - \sum_{j\in[n]} L_{ij} \geq 0$ for every bank $i \in [n]$, i.e., there are no defaults in the system if all obligations are paid under zero fees, then the universal application of this zero fee -- a pure strategy -- is a Nash equilibrium bidding strategy.  Furthermore this is a Pareto optimal bidding strategy for the system of banks as the maximum possible amount of capital is retained by the banks.  (Pareto optimal bidding strategies are formally presented in Section~\ref{sec:bidding-pareto} below.)  

Intriguingly, this leads to the conclusion that blockchain miners increase the collected transaction fees if there is some default risk towards these interbank contracts. Such default risk increases the fees paid only up to a point though as the total payments drop as banks are in worse financial health which can counteract the higher fees.
\end{remark}

Though Corollary~\ref{cor:nash-exist} guarantees the existence of a mixed optimal bidding strategy, we now wish to formally demonstrate that a pure strategy equilibrium need not exist.  We will do this in a simple, three bank, system in which the implications of all pure strategies are easily interpretable. 
\begin{example}\label{ex:no-pure}
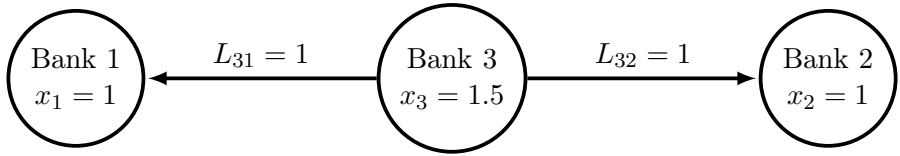
\begin{figure}[h!]
\centering
\begin{tikzpicture}
\tikzset{node style/.style={state, minimum width=0.36in, line width=0.5mm, align=center}}
\node[node style] at (0,0) (x1) {Bank $1$\\$x_1 = 1$};
\node[node style] at (10,0) (x2) {Bank $2$\\$x_2 = 1$};
\node[node style] at (5,0) (x3) {Bank $3$\\$x_3 = 1.5$};

\draw[every loop, auto=right, line width=0.5mm, >=latex]
(x3) edge node {$L_{31} = 1$} (x1)
(x3) edge node[above] {$L_{32} = 1$} (x2);
\end{tikzpicture}
\caption{A simple network topology for Example~\ref{ex:no-pure}.}
\label{fig:no-pure-network}
\end{figure}
Consider the three bank uncollateralized ($\mu = 0$) system displayed in Figure~\ref{fig:no-pure-network}, i.e., in which bank $3$ owes $\$1$ to both bank $1$ and bank $2$ but has only $\$1.50$ in liquid assets and no other obligations exist within this system.  To simplify the discussion, we will assume $C^\# \geq 2$ herein.  The bids need to be submitted in full dollar units, i.e., only integer values are accepted ($D = 1$), and all the fees must be a multiple of $10\%$ ($F = 10$).  As such, and due to the network under consideration, the feasible bids are of the form $b_{3i}(f) = \delta_{f_i}(f)$ for $f_i \in \fcal(10)$ for both banks $i = 1,2$.  That is, we can completely characterize bank $i$'s feasible bidding strategies $\bcal_{3i}(1,10) = \{\delta_{f_i}(\cdot) \; | \; f_i \in \fcal(10)\}$ by the choice $f_i \in \fcal(10)$ without the need to consider the size of the bid.

Herein we will assume both banks $1$ and $2$ are risk-neutral and solely wish to maximize the (expected) payment received from bank $3$. Due to the construction of the interbank system and the eligible bids, the possible payments to bank $1$ and $2$ can easily be found.  Specifically, the payment received by bank $1$ is given by the mapping $P_1: \fcal(10)^2 \to \bbr_+$ defined by:
\begin{align*}
P_1(f_1,f_2) &= \left(1 - f_1\right) \times \begin{cases} \$1.00 &\text{if } f_1 > f_2 ,\\ \$0.75 &\text{if } f_1 = f_2, \\ \$0.50 &\text{if } f_1 < f_2, \end{cases}
\end{align*}
where bank $1$ bids at the (fixed) fee of $f_1 \in \fcal(10)$ and bank $2$ at $f_2 \in \fcal(10)$.  
As there are no other obligations within this financial system, $P_1$ completely characterizes bank $1$'s objective.
By symmetry of the banks, bank $2$'s payoff is given by the mapping $P_2(f_1,f_2) = P_1(f_2,f_1)$ which it is attempting to maximize.

First, we consider the possibility for pure strategies.  We do this by considering the best response function $f_1^\dagger: \fcal(10) \to 2^{\fcal(10)}$ for player $1$ given player $2$'s bidding fee level.  That is, $f_1^\dagger(f_2)$ provides the fee that bank $1$ would select so as to maximize its expected payout $P_1$ given that bank $2$ is bidding at fee level $f_2 \in \fcal(10)$.  
(Due to the symmetry between banks, the best response function $f_2^\dagger: \fcal(10) \to 2^{\fcal(10)}$ for bank $2$ is identical to that for bank $1$, i.e., $f_2^\dagger(f) = f_1^\dagger(f)$ for any $f \in \fcal(10)$.)
This best response function is computed as:
\begin{align*}
f_1^\dagger(f_2) &= \begin{cases} f_2 + 0.1 &\text{if } f_2 < 0.4, \\ \{0 \, , \, 0.5\} &\text{if } f_2 = 0.4, \\ 0 &\text{if } f_2 > 0.4, \end{cases} \quad \; \text{for all} \; f_2 \in \fcal(10).
\end{align*}
This best response function can be understood by considering the three cases:
\begin{enumerate}
\item If $f_2 < 0.4$: bank $1$ has three fees to consider \begin{enumerate*} \item exceeding the fee level of bank $2$ by as little as possible ($f_1 = f_2 + 0.1$) to recover $\$(0.9-f_2)$, \item matching bank $2$ ($f_1 = f_2$) to recover $\$0.75 \times (1-f_2)$, or \item accepting the remainder of bank $3$'s assets after bank $2$ receives $\$1$ ($f_1 = 0$) to recover $\$0.50$. \end{enumerate*} It is trivial to verify that the first case provides the maximal payout so long as $f_2 < 0.4$.
\item If $f_2 = 0.4$: bank $1$ can recover $\$0.50$ with $f_1 = 0$ or $f_1 = 0.5$; in the former case by recovering the remainder of bank $3$'s assets after bank $2$ takes $\$1$ and in the latter case by receiving $\$1$ itself but paying $\$0.50$ in fees to the miners.  Any other choice of fees has bank $1$ paying more fees without receiving a larger share of the payment than in the $f_1 = 0$ case (if $f_1 < 0.4$) or $f_1 = 0.5$ case (if $f_1 > 0.4)$.
\item If $f_2 > 0.4$: the maximum payout possible for bank $1$ is $\$0.50$ with $f_1 = 0$ as any $f_1 \geq f_2$ costs more in fees than can be recovered by taking the remainder of bank $3$'s assets.
\end{enumerate}
With the best response functions, a pure strategy Nash equilibrium would need to satisfy the fixed point problem $f_1 \in f_1^\dagger(f_2)$ and $f_2 \in f_2^\dagger(f_1)$.  Mathematically, we can easily verify that no such fixed point exists.  Intuitively, beginning from proposed fees of $f_1 = f_2 = 0$, the bank's would sequentially increase their fee levels up to $0.4$ or $0.5$ (depending on the choice taken when the other bank is at fee level $0.4$) before cycling back to a fee level of $0$.  Such a cycle demonstrates that the banks would never settle on a given (pure) bidding strategy.

Though no pure strategy equilibrium exists, by Corollary~\ref{cor:nash-exist}, a mixed strategy equilibrium must exist. As the bidding strategy can be completely characterized by $f_i$, in this setting we will define bank $i$'s bidding strategy (for bank $i = 1,2$) such that the fee level $k/10$ is chosen with probability $p_{ik}\geq 0$ for $k \in 10\times\fcal(10)$ (and so that $\sum_{k = 0}^{10} p_{ik} = 1$).  Herein, we will consider only the symmetric mixed strategies as the banks are identical, i.e., $p := p_1 = p_2$ defines the strategy of both banks.  In fact, it can readily be verified that only exists a single equilibrium $p^\dagger$ with
\begin{align*}
p^\dagger &\approx \left(0.1787 \, , \, 0.0634 \, , \, 0.2392 \, , \, 0.1499 \, , \, 0.3689 \, , \, 0 \, , \, 0 \, , \, 0 \, , \, 0 \, , \, 0 \, , \, 0 \right)^\top.
\end{align*}
With this symmetric equilibrium, both banks have the expected payout of (approximately) $\$0.5447$ which, notably provides only a marginal benefit over a \emph{guaranteed} payout of $\$0.50$ from bidding $f_i = 0$ with 100\% probability (i.e., a pure strategy). 
\end{example}

\subsection{Pareto Optimal Bidding}\label{sec:bidding-pareto}

Within this section we wish to consider the setting in which the bids on the (unsecured) obligations $(1-\mu)L_{ij}$ are made so as to be Pareto optimal.  That is, bids are made so that no single bank can improve its (expected) utility without decreasing the utility of another bank.  As with the Nash equilibrium bidding strategies presented above, each bank $i$ seeks to maximize its expected utility
\[\lim_{t \to \infty} \E[u_i(V_i^t(\bb^\ddagger))]\]
for continuous utility function $u_i$.  As in the prior section, by Proposition~\ref{prop:limit-K}, this utility is taken over $[K^*_i(\bb^\ddagger)]^+$ in the $C^\# \geq \abs{\{(i,j)\in [n]^2 \mid L_{ij} > 0\}}$ setting of Section~\ref{sec:limit}.

Using the notion of linear scalarizations (see, e.g.,~\cite[Chapter 11.2.1]{jahn2011}), a Pareto optimal bidding strategy $\bb^\ddagger = (b_{ij}^\ddagger)_{i,j\in [n]}$ can be found via the single optimization problem:
\begin{equation}\label{eq:pareto}
\bb^\ddagger \in \argmax\left\{\lim_{t \to \infty} \sum_{k \in[n]} w_k \E[u_k(V_k^t(\bb))] \; | \; b_{ij} \in \bcal_{ij}(D,F) \; \text{for all} \; i,j\in[n]\right\},
\end{equation}
for $\bw \in \bbr_{++}^{n+1}$.  This optimization problem provides a Pareto optimal solution as, if bank $i$ has an improved utility under a different bidding strategy, then some other bank $j$ must have a lower utility for $\bb^\ddagger$ to be optimal.
In comparison to the Nash equilibrium setting~\eqref{eq:nash}, only pure strategies (i.e., deterministic) ones are considered for the Pareto optimal bidding strategy~\eqref{eq:pareto}.  This is because: \begin{enumerate*} \item banks would prefer the certainty of a deterministic bidding strategy so that the costs are known at the implementation of the contract and \item adding (independent) random bidding strategies cannot improve the system-wide expected utility (weighted by $\bw$) beyond those given by the pure strategies.\end{enumerate*}

\begin{proposition}\label{prop:pareto}
There exists a Pareto optimal bidding strategy to~\eqref{eq:pareto} (for any $\bw \in \bbr_{++}^{n+1}$) for any blockchain clearing system with finite discretization $D,F \in \N$.
\end{proposition}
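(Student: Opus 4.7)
The plan is to leverage the finite discretization to reduce the existence claim to a maximization over a finite set, then invoke the standard linear scalarization argument to extract Pareto optimality. By Assumption~\ref{ass:discrete} and the fact that $D,F\in\N$, each feasible bidding strategy set $\bcal_{ij}(D,F)$ is finite: there are only finitely many ways to assign the discrete bid amounts $\dcal_{ij}(D)$ to the discrete fee levels $\fcal(F)$ subject to the mass constraint $\sum_{f\in\fcal(F)} d_{ij}(f) = (1-\mu)L_{ij}$. Therefore the joint pure-strategy space $\prod_{i,j\in[n]} \bcal_{ij}(D,F)$ is a finite set, and any real-valued objective over it attains its maximum.

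The first step is to verify that the objective in~\eqref{eq:pareto} is well-defined and finite for every pure strategy profile $\bb$. By Proposition~\ref{prop:limit}, the cash account process $(V_k^t(\bb))_{t\in\bbn}$ converges pathwise to a limit $V_k^*(\bb)$ for every bank $k$ and every realization of the external randomness (stress scenario plus the arbitrary tie-breaking in Algorithm~\ref{alg:bids}\eqref{stepMax}). Moreover $V_k^t(\bb)$ is bounded above by $V_k^0 + \sum_{j\in[n]}\int_0^1 b_{jk}(f)df$, a finite deterministic quantity depending on $\bb$ and the initial data. Continuity of $u_k$ on the compact interval $[0,\bar V_k]$ containing all $V_k^t$ yields that $u_k(V_k^t(\bb))$ is bounded, so dominated convergence gives $\lim_{t\to\infty}\E[u_k(V_k^t(\bb))] = \E[u_k(V_k^*(\bb))] \in \bbr$. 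Summing against the positive weights $w_k$ produces a well-defined real objective.

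Since the joint strategy space is finite and the objective is real-valued on this set, a maximizer $\bb^\ddagger$ exists trivially. To conclude, I would verify that any such $\bb^\ddagger$ is Pareto optimal: if there were another feasible profile $\bb'$ with $\lim_{t\to\infty}\E[u_k(V_k^t(\bb'))] \geq \lim_{t\to\infty}\E[u_k(V_k^t(\bb^\ddagger))]$ for all $k$ and strict inequality for at least one $k$, then because every component of $\bw$ is strictly positive, the weighted sum would be strictly larger at $\bb'$, contradicting optimality of $\bb^\ddagger$ in~\eqref{eq:pareto}.

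The only conceptually nontrivial step is the well-definedness in the second paragraph, specifically arguing that the external randomness in Algorithm~\ref{alg:bids} does not destroy the pointwise convergence used by Proposition~\ref{prop:limit}; this is routine since the convergence there is purely monotone and pathwise. Everything else is a direct consequence of finiteness of the strategy space and the linear scalarization lemma, so no fixed-point or topological argument (as in Corollary~\ref{cor:nash-exist}) is needed here.
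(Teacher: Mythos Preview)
Your proof is correct and follows essentially the same approach as the paper: the paper also argues that each $\bcal_{ij}(D,F)$ has finite cardinality so the maximum in~\eqref{eq:pareto} is attained, and then invokes the standard linear scalarization result (citing \cite[Chapter~11.2.1]{jahn2011}) to conclude Pareto optimality. You have added a careful check that the objective is well-defined via Proposition~\ref{prop:limit} and dominated convergence, and you spell out the scalarization argument explicitly, but the underlying idea is identical.
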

\begin{proof}
First, if an optimum exists to~\eqref{eq:pareto} for any $\bw \in \bbr_{++}^{n+1}$ then it is Pareto optimal (see, e.g.,~\cite[Chapter 11.2.1]{jahn2011}).  Therefore, existence follows directly from the finite cardinality of the feasible bids $\abs{\bcal_{ij}(D,F)} = (D(1-\mu)L_{ij}+1)(F+1)$ for any pair of banks $(i,j) \in [n]^2$. 
\end{proof}

\begin{remark}\label{rem:no-competition-2}
Recall from Remark~\ref{rem:no-competition}, if $x_i + \sum_{j\in[n]}[\mu(1-f_R) + (1-\mu)]L_{ji} - \sum_{j\in[n]} L_{ij} \geq 0$ for every bank $i$ (i.e., there are no defaults in the system if all obligations are paid at the zero fee level), then the universal application of the zero fee is both a Nash and Pareto optimal bidding strategy.
\end{remark}

We wish to conclude this section by presenting a four bank system which demonstrates that the Pareto optimal bidding strategies can be nontrivial and complex.  This is undertaken in a setting with a Bernoulli systematic shock in which the expected utilities are taken over the random endowments.\footnote{We choose to consider a systematic shock as that setting was shown to be the most relevant for stress testing for centralized clearing in~\cite{BF18comonotonic}.}  
\begin{example}\label{ex:pareto}
Consider a heterogeneous four bank system in which each bank has, additional, external obligations. We will model this with the introduction of an external or society node with no obligations to any of the banks, but holding interbank assets that the other banks owe it.  We will use the standard notation and denote this society node $0$ and place it first in the order of banks.  In this way we will let $[n] = \{0,1,\dots,n\}$ within this example.  
The (random) assets $\bx$ and nominal liabilities for each bank $i$ are given by
\begin{align*}
\bx &= \begin{cases} (1 \, , \, 3 \, , \, 2 \, , \, 5)^\T &\text{with probability } 25\%, \\ (6 \, , \, 8 \, , \, 7 \, , \, 10)^\T &\text{with probability } 75\%, \end{cases} \qquad \text{and} \qquad
\bL = \left(\begin{array}{ccccc} 3 & 0 & 7 & 1 & 1 \\ 3 & 3 & 0 & 3 & 3 \\ 3 & 1 & 1 & 0 & 1 \\ 3 & 1 & 2 & 1 & 0 \end{array}\right).
\end{align*}
Within the liabilities matrix $\bL$, the external obligations to node $0$ are provided within the first column.
We can view the random assets $\bx$ as being a scenario in which there is a 25\% probability of a shock of size $5$ to all banks.
To simplify the discussion, we will assume $C^\# \geq 16$ herein.  Additionally, we will assume that all obligations are uncollateralized ($\mu = 0$) for a direct comparison with the centralized Eisenberg-Noe system~\cite{EN01}.  In fact, under the centralized clearing system of~\cite{EN01}, only bank 4 is solvent after clearing in the stress scenario; bank 1 is a so-called first order default (as it defaults even if all its counterparties pay in full), bank 2 is a second order default (as it defaults as a consequence of bank 1 defaulting, but assuming all other counterparties pay in full), and bank 3 is a third order default (as it defaults as a consequence of the first and second order defaults only).  Without any stress, bank 1 will still default, but there are no cascading failures.

For the decentralized clearing, we consider the discretization of fees as $F = 40$ so that fee levels are at $2.5\%$ intervals and discretization of bids as $D = 100$ so that bids are placed in ``cents''.
We consider the Pareto optimal problem~\eqref{eq:pareto} in which each bank (and external node $0$) wish to maximize their expected equity, i.e., 
\[u_i(z) = z^+, \quad \text{for all} \quad i\in[n]. 
\]
Due to the application of the utility functions to the terminal cash account (i.e., the positive net worths), this utility is, functionally, equivalent to the risk-neutral setting $u_i(z) = z$.
The scalarization to find the Pareto optimal bidding strategy is taken as $\bw = (0.1 \, , \, 1 \, , \, 1 \, , \, 1 \, , \, 1)^\T$, i.e., such that the expected equity of the external node is weighted at one-tenth that of all banks.
The resulting Pareto optimal solution is:
\begin{align*}
\bd^\ddagger(0\%) &= \left(\begin{array}{ccccc} 3 & 0 & 0 & 0 & 0 \\ 3 & 3 & 0 & 3 & 0 \\ 3 & 1 & 1 & 0 & 1 \\ 3 & 1 & 2 & 1 & 0 \end{array}\right),\quad
\bd^\ddagger(2.5\%) = \left(\begin{array}{ccccc} 0 & 0 & 7 & 1 & 0 \\ 0 & 0 & 0 & 0 & 3 \\ 0 & 0 & 0 & 0 & 0 \\ 0 & 0 & 0 & 0 & 0 \end{array}\right),\quad
\bd^\ddagger(5\%) = \left(\begin{array}{ccccc} 0 & 0 & 0 & 0 & 1 \\ 0 & 0 & 0 & 0 & 0 \\ 0 & 0 & 0 & 0 & 0 \\ 0 & 0 & 0 & 0 & 0 \end{array}\right),
\end{align*}
and $d^\ddagger_{ij}(f) = 0$ for every $i,j$ and $f > 5\%$ with $f \in \fcal(40)$.  
Notably, though we imposed a discretization $D = 100$ on the bid amounts, the Pareto optimal clearing solution is all-or-nothing at every fee level.  In fact, any choice $D \in \N$ satisfying Assumption~\ref{ass:discrete} will result in the same Pareto optimal bidding strategy; this is true even in the limit to the continuum of bids $D \to \infty$.

To conclude this example, consider the net worths and utilities under the Pareto optimal bidding strategy with the Eisenberg-Noe centrally cleared solution.
First, consider the centrally cleared Eisenberg-Noe net worths.  Letting $\bK_{\rm EN}^u,\bK_{\rm EN}^s$ denote the net worths in the unstressed or stressed scenarios respectively,
\begin{align*}
\bK_{\rm EN}^u &\approx (-1 \, , \, 5.4167 \, , \, 5.9167 \, , \, 7.9167)^\T, \\
\bK_{\rm EN}^s &\approx (-6.8108 \, , \, -3.0270 \, , \, -0.3243 \, , \, 1.6216)^\T.
\end{align*}
The payments to society are $11.75$ in the unstressed scenario and $9.3784$ in the stressed scenario.  This leads to a (weighted) expected utility of $15.9586$.
Consider, now, the Pareto optimal clearing.  Letting $\bK^u,\bK^s$ denote the net worths in the unstressed and stressed scenarios respectively,
\begin{align*}
\bK^u &= (-1 \, , \, 5.825 \, , \, 5.975 \, , \, 7.875)^\T, \\
\bK^s &\approx (-6.9205 \, , \, -2.5802 \, , \, -0.3629 \, , \, 2.8145)^\T.
\end{align*}
The payments to society are $11$ in the unstressed scenario and $7.9585$ in the stressed scenario.  This leads to a (weighted) expected utility of $16.4838$; as such the blockchain based clearing does indeed improve the (weighted) utility of the system.
\end{example}

\subsection{Implications for Financial Stability}\label{sec:stability}


In contrast to centralized clearing procedures (e.g., \cite{EN01,RV13}), the blockchain clearing system introduced herein has no central authority.  
The main difficulty in real-life clearing, the one which the fixed recovery of~\cite{RV13} is supposed to model, is the combination of determining the available funds and the resolution of disagreements over seniority of payments. Therefore, a typical modeling approximation is to fix the recovery rate and apply it to the entire payment.
In our blockchain clearing mechanism, these two issues are resolved automatically.
This reduces frictions in the clearing procedure as the blockchain system automates the claims resolution procedure through the seniority structure determined by the bids.  This occurs without assuming, e.g., the pro-rata repayment scheme of~\cite{EN01,RV13}.  In allowing the banks to (decentrally and optimally) determine their desired seniority, the blockchain clearing system endogenizes the effective recovery rate.

Furthermore, centrally cleared systems have an often overlooked risk -- cybersecurity risk.\footnote{\url{https://www.bloomberg.com/news/articles/2021-01-12/what-do-wall-street-leaders-think-is-the-next-big-risk}}$^\text{,}$\footnote{\url{https://www.bloomberg.com/news/articles/2021-02-24/fed-investigating-outage-in-interbank-payment-system}}  This systemic risk has been highlighted by recent ransomware attacks (e.g., the Colonial Pipeline ransomware attack).  However, a decentrally cleared system with an immutable distributed ledger has no single entity that can cause the failure of the entire financial clearing system~\cite{singh2016blockchain}.  As such, this decentralized clearing procedure inherently increases financial stability beyond other points considered in this work.  
Notably, and in contrast to other decentralized clearing procedures (e.g.,~\cite{schaarsberg2018solving,ketelaars2020decentralization}), the blockchain clearing system allows for optimized claims resolution (as opposed to exogenous rules such as pro-rata payment schemes assumed in prior works) and guarantees that all payable obligations are fulfilled through the actions of the miners.  

The optimal bids placed in our blockchain clearing system also act as a heuristic, endogenous, stress testing procedure.  As noted in Remark~\ref{rem:no-competition}, if no stresses are exhibited in the financial system, then the $0\%$ fee bid is both Nash and Pareto optimal.  Therefore, as a first order approximation, if non-zero fees are observed in smart contracts then this indicates positive probability of defaults on some obligations.  As higher fees correspond exactly with higher seniority, intuitively, the greater the counterparty risk, the more fees a bank would be willing to pay to guarantee payment.  Such notions are observed in both Example~\ref{ex:no-pure} and Example~\ref{ex:pareto}.  Thus, the optimal bids can be viewed as a measure of systemic risk as viewed by the banks in the system.

In fact, these optimal bids can be viewed as a ``price'' of transacting with a risky counterparty.  These charges can, therefore, incentivize banks to transact with safer counterparties thus reducing the overall risk of default contagion without introducing moral-hazards such as a bailout scheme to support failing institutions.  Specifically, the centrally cleared systems highlighted in Section~\ref{sec:centralized} with recovery rate $\alpha < 1$ can result in bailout coalitions to rescue a failing institution or else face larger losses and the risk of default contagion.  Optimal bidding in a blockchain clearing system can reduce the risk of default contagion through, e.g., greater recovery than is often assumed on defaults in the model of~\cite{RV13}.  We refer the reader to Example~\ref{ex:pareto} to see the high recovery of assets in case of default by comparing the results therein to the $\alpha = 1$ setting of \cite{EN01}.  Therefore the blockchain system dually increases the (short term) recovery of defaulted assets and increases bank accountability for their risky transactions.  That is, the consequences of the blockchain system can lead to greater financial stability over the centrally cleared system; though beyond the scope of this work, a comparison of the network formation problem under central and blockchain cleared systems would be of great interest to fully understand these implications.


%

\section{Conclusion}\label{sec:conclusion}
In this paper we proposed a novel decentralized clearing mechanism through blockchain, which endogenously and automatically provides a claims resolution procedure. We provided an algorithm which builds the entire blockchain to clear these obligations so as to guarantee the payments can be verified and the miners earn a fee. We show that, conditional on blocks being sufficiently large, there always exist greatest and least clearing net worth. Moreover, under weak regularity conditions, there is a unique clearing net worths vector.  We also considered the formulation of the optimal bidding strategies for each firm in the network so that all firms are utility maximizers with respect to their terminal wealths.  These optimal bidding strategies are approached in both Nash and Pareto settings.  

This analysis of the blockchain clearing system can be extended in the following ways.  
First, more detailed and complex decentralized clearing rules could be considered.  For example, verifying transactions costs the miners computational effort. The fees are a way to compensate the miners for this pivotal task.  As such, a reserve price $\$\epsilon > 0$ can be imposed so that the collected fees need to exceed this threshold for the miners to even consider placing the payment on the blockchain.  This would imply that $\bcal_t((\widehat\bb^s)_{s < t})$ is appended to include the additional constraint $\int_0^1 f b_{ij}^t(f)df \in \{0\} \cup [\epsilon,\infty)$ for every $i,j \in [n]$.  Such a constraint guarantees that the blockchain will clear the network of obligations to converge to its terminal state in a finite number of blocks (with the possibility that some defaulting banks have a small, but positive, terminal cash account).  Such a reserve price, were it known, would also influence the bidding strategies implemented by the banks as well.
Second, we implicitly assumed that the optimal bidding strategies were determined under a complete information setting (i.e., each bank has full information about the network and the bidding implemented by all other banks).  However, a partial information system could be considered instead (e.g., banks only have local information about the liability network).  Under such a setting, each bank would have (heterogeneous) beliefs about the state of the system. An optimal bidding strategy implemented by bank $i$ would then be taken conditional on the information set available to it. Such a setting presents difficulties in modeling the belief set for each bank that needs to be considered more fully.
Finally, as begun within Section~\ref{sec:stability}, implications of this blockchain clearing system for systemic risk is of great importance.  For instance, we recommend a replication of~\cite{AOT15,amini-compression} on the implications of network topology on financial stability for this decentralized clearing system. 


\bibliographystyle{plain}
\bibliography{bibtex2}

\end{document}